\newtheorem{thm}{Theorem}
\newtheorem{cor}[thm]{Corollary}
\theoremstyle{remark}
\newtheorem{rem}{Remark}
\theoremstyle{definition}
\begin{document}
\title{ Coverage in mmWave Cellular Networks with Base station Cooperation }
\author{
\IEEEauthorblockN{Diana Maamari, Natasha Devroye, Daniela Tuninetti \\}
\IEEEauthorblockA{%
University of Illinois at Chicago, Chicago IL 60607, USA,\\
Email: {\tt dmaama2, devroye, danielat @ uic.edu}}%
 \thanks{%
Diana Maamari, Natasha Devroye, and Daniela Tuninetti are with the Electrical and Computer Engineering Department of the University of Illinois at Chicago, Chicago, IL 60607 USA (e-mail: \{dmaama2, devroye, danielat\}@ uic.edu); their work was partially funded by NSF under award number 1017436,1422511,1216825; the contents of this article are solely the responsibility of the author and do not necessarily represent the official views of the NSF.
The results in this paper will be submitted in part in the 2015 IEEE Global Communications Conference.
}}
\maketitle

\begin{abstract}

The presence of signal outage, due to shadowing and blockage, is expected to be the main bottleneck in millimeter wave (mmWave) networks. Moreover, with the anticipated vision that mmWave networks would have a dense deployment of base stations, interference from strong line-of-sight base stations increases too, thus further increasing the probability of outage. To address the issue of reducing outage, this paper explores the possibility of base station cooperation in the downlink of a mmWave heterogenous network. The main focus of this work is showing that, in a stochastic geometry framework, cooperation from randomly located base stations decreases outage probability. With the presumed vision that less severe fading will be experienced due to highly directional transmissions, one might expect that cooperation would increase the coverage probability; our numerical examples suggest that is in fact the case. Coverage probabilities are derived accounting for: different fading distributions, antenna directionality and blockage. Numerical results suggest that coverage with base station cooperation in dense mmWave systems and with no small scale fading considerably exceeds coverage with no cooperation. In contrast, an insignificant increase is reported when mmWave networks are less dense with a high probability of signal blockage and with Rayleigh fading. 
\end{abstract}

\section{Introduction}
\label{sec:Intro}

One of the fundamental goals for 5G  is a radical increase in data rates~\cite{survey:whatwill5gbe}.  It is anticipated that higher data rates will be achieved by extreme densification of base stations, massive multiple-input-multiple-output (MIMO), increased data rate and/or base station cooperation~\cite{survey:whatwill5gbe}. However, prime microwave wireless spectrum has become severely limited, with little unassigned bandwidth available for emerging wireless products and services. Therefore, to fulfill the need for increased bandwidth, millimeter wave (mmWave) spectrum between 30 and 300 GHz have been considered for future 5G wireless mobile networks. Until recently, mmWave frequency bands were presumed to be unreliable for cellular communication due to blockage, absorption, diffraction, and penetration, resulting in outages and unreliable cellular communications~\cite{elzammWavesurvey}. However, the advances in CMOS radio-frequency circuits, along with the very small wavelength of mmWave signals, allows for the packing of large antenna arrays at both the transmit and receive ends, thus providing highly directional beam forming gains and acceptable signal-to-noise ratio (SNR)~\cite{elzammWavesurvey},~\cite{omarmmWaveprecoding}. This directionality will also lead to reduced interference when compared to microwave networks~\cite{elzammWavesurvey}.  It is thus anticipated that mmWave spectrum holds tremendous potential for increasing spectral efficiency in upcoming cellular systems~\cite{Rappa_itwillwork}.

To further address the demand for higher data rates, cooperation between macro, pico and femto base stations has been proposed to enable a uniform broadband user experience across the network. The dynamic coordination across several base stations - known as coordinated multipoint (CoMP) - will limit the intercell interference thus increasing throughput and enhancing performance at cell borders~\cite{complte}.

\paragraph*{Past Work}

Coverage and capacity in mmWave cellular systems and in CoMP networks have been studied. In~\cite{salammmWavecoverage} the authors compared the performance, in terms of coverage and capacity, of a stochastic geometry based mmWave network (without CoMP) to a microwave cellular network, at a single antenna receiver (typical user). In~\cite{salammmWavecoverage}, directionality at the transmitters, intra-cell and inter-cell interference were accounted for but blockage was not included in the analysis. The authors show that coverage in mmWave systems increases with the decrease in the half-power beam width of the radiation pattern. In fact, having narrower beams decreases beam overlap, thus decreasing intra-cell and inter-cell interference and increasing coverage probability. In this paper, we propose to study the problem of base station cooperation in the downlink of dense mmWave heterogenous network as a means to combat blockage and decrease signal outage. Our derivations of the coverage probability, similarly to~\cite{salammmWavecoverage}, account for interference experienced at the typical user, but in addition blockage is incorporated in the analysis.

In~\cite{taimmWavedenseconference} (see also journal version in~\cite{taimmWavedensejournal}) the authors proposed a stochastic geometry framework to evaluate the performance of mmWave cellular networks (without CoMP) with blockage. The authors incorporate blockage by modeling the probability of a communication link - being either a line-of-sight (LOS) or non-LOS (NLOS) link - as function of the length of the communication link from the serving base station. Different pathloss laws were applied to the LOS and NLOS links. 
Numerical results in~\cite{taimmWavedensejournal} suggest that higher data rates can be achieved when compared to microwave cellular networks. One of the interesting observations made in~\cite{taimmWavedenseconference} is that mmWave networks should be dense but not too dense - since the number of LOS interfering base stations increases when the density of base stations increases. We willl leverage results from~\cite{taimmWavedensejournal} to incorporate blockage and differentiate between having LOS links and NLOS links from the base stations in the analysis of the problem of joint transmission in mmWave networks.

In~\cite{gauravmicrowavecomp} (see also journal version in~\cite{Nigam_CoMP_journal}) the authors used stochastic geometry for studying microwave joint transmission CoMP where single antenna base stations transmit the same data to single antenna users. Different performance metrics (including coverage probability) were considered, to evaluate the performance at the typical user located at an arbitrary location (general user) and receiving data from base stations with the strongest average received power. A user at the cell-corner (worst-case user) was also considered. The coverage probability was derived for both types of users under the assumption that the base stations have no CSI. The case with full CSI was evaluated with different performance metrics (diversity gain and power gain). 
The derivation of the coverage probability for a mmWave network with base station cooperation in this work is similar to that in~\cite{gauravmicrowavecomp} for the general user, except that key factors specific to the mmWave channel model have to incorporated, some of which are the high directional transmission at the base stations, blockage and improved fading distribution due to sparse scattering.  

In~\cite{omarmmWaveprecoding} the authors considered the problem of finding a suitable single user MIMO transmit precoding and receive combining in mmWave systems under a set of hardware constraints suitable for large antenna arrays. Both problems (transmit precoding and receiver combining) were formulated as a sparsity constrained signal recovery problem and solved using orthogonal matching pursuit algorithms. The solution suggests that the transmitter applies a number of array response vectors at the RF level (which are phase only vectors) and forms linear combinations of these vectors using a digital precoder.  A similar observation was made for the receiver combining operation.
In~\cite{Caire_JSDM1} a multiuser MIMO downlink scheme, Joint Division Spatial Multiplexing (JSDM), was proposed. The scheme is suitable for frequency division duplexing (FDD) systems with large number of antennas (massive MIMO) and non-ideal channel state information (CSI) at the base station. The base stations equipped with multiple antennas were assumed to serve $K$ single-antenna users. Users who have identical covariance matrices were grouped together while separate groups of users were assumed to have almost orthogonal eigenspaces of channel covariance matrices.  
The proposed two-stage JSDM exploits channel matrices properties and finds the optimal precoding and prebeamforming matrices. 
In this paper, we assume that the cooperating base stations beam steer to the typical user using vectors that can be readily implemented using phase shifters, and the receiver applies a single vector to process the received signal from the cooperating base stations.

%
\paragraph*{ Main Contributions}
\label{sec:maincontribution}
In this paper, we propose to study the benefits of base station cooperation in the downlink of a heterogenous mmWave cellular system as a mean to decrease signal outage. We anticipate that the cooperation provides substantial gain in coverage with the anticipated improved fading distribution and extreme base station densification in mmWave networks.
Our extensive numerical examples show that this is in fact the case for the following scenarios, Case 1) for dense mmWave networks where the number of interfering LOS base stations increases and Case 2) when there is no small scale fading channel on the channel gains from the cooperating base stations (a good assumption due to the high directional transmission). We also provide examples when cooperation does not provide substantial increase in coverage probability.
We consider a stochastic geometry based model as in~\cite{salammmWavecoverage,taimmWavedenseconference, taimmWavedensejournal,gauravmicrowavecomp,jefftractablemodel,Nigam_CoMP_journal}, to study coverage in CoMP heterogenous mmWave network. To do so we need to incorporate key factors specific to a mmWave channel model. These specific mmWave characteristics are: a realistic mmWave channel model, highly directional channel gains and sensitivity to blockages. 
%

Coverage probabilities are derived for the case of a single antenna receiver (typical user). We use concepts from~\cite{salammmWavecoverage,taimmWavedenseconference,taimmWavedensejournal} to incorporate blockage, interference and different fading distributions (Rayleigh, Nagakami and no fading) in our analysis. The joint distribution of the cooperating base stations to the typical user in the presence of blockage is also derived. 
%
%
%
%

\paragraph*{Paper Organization and Notations}
\label{sec:paperorganizationaandnotations}
The downlink CoMP mmWave heterogenous network model, the beam steering at the base stations and the decoding at the typical user are explained in Section~\ref{sec:network model}. The coverage probability in the absence of blockage, and with Rayleigh fading is derived in Section~\ref{sec:performance analysis no blockage ULA analysis}. In Section~\ref{sec:coverageprobabilitywithblockage}, we consider Rayleigh fading mmWave networks with a blockage parameter at each tier, and the coverage probability is derived accordingly. In Section~\ref{sec:Coverage Probability with Nakagami fading and blockage}, we derive the coverage probability for the same network model with blockage but use the Nakagami fading distribution to model the fading distribution on the direct links of the cooperating base stations. The assumption of having no small scale fading for the channel gains from the cooperating base stations is further considered in Section~\ref{sec:CoverageProbabilitywithNoSmallScaleFading}.
Proofs may be found in the Appendices.  Tables~\ref{Table1},~\ref{Table2},~\ref{Table3},~\ref{Table4} summarize all the notations used throughout the paper. 
\begin{table*}[h]
\centering
\caption{Poisson point process variables} 
\label{Table1}
\scalebox{0.85}{
\begin{tabular}{| l | l |}
\hline
Notation  & Description  \\ 
\hline
$K$& Total number of tiers \\
\hline
$\Phi_{k}$& Homogenous Poisson Point Process (PPP) indexed by $k\in[1:K]$ \\
\hline 
$\lambda_k$ & Intensity of the PPP $\Phi_{k}$\\
\hline
$P_k$ & Available power at each base station that belongs to tier $k\in[1:K]$\\
\hline 
$ v$& Points on 2D plane representing location of base stations\\
\hline 
$\| v\| $ & Distance from point $v$ to the typical user located at the origin\\
\hline 
$\alpha$ & Pathloss exponent assumed equal for all tiers\\
\hline
$\Theta_{k}=\{\frac{\| v\| ^\alpha}{P_k}, v\in \Phi_{k}\}$ & Normalized pathloss between each base station in $\Phi_{k}$ and the typical user\\
\hline 
$\lambda_k(v)$ & Intensity of $\Theta_k$\\
\hline
$\Theta=\cup_{k=1}^K \Theta_k $ & Process representing the union of non-homogenous PPP, elements are indexed in increasing order WLOG \\
\hline 
$\lambda(v)= \sum_{k=1}^K\lambda_k(v)$ & Intensity of $\Theta$\\
\hline
$\gamma'_i= \frac{\| v_i\| ^\alpha}{P_k}$ & Normalized pathloss\\
\hline
${\bf \gamma'}=\{\gamma'_{1},\cdots,\gamma'_{n}\} $ & Set of normalized pathloss of the cooperating base stations\\
\hline
$ f_{\Gamma'}(\gamma')$ & Joint distribution of $\gamma'$\\
\hline
\end{tabular}}
\end{table*}

\begin{table*}[h]
\centering
\caption{General channel model variables} 
\label{Table2}
\scalebox{0.85}{
\begin{tabular}{| l | l |}
\hline
Notation  & Description  \\ 
\hline 
$N_t, N_r$ & Number of antennas at each base station and at the receiver\\
\hline
$H_v$ & MIMO channel from base station at location $v$ to typical user\\
\hline 
$h_v$ &Small scale fading \\
\hline 
$L_v$ & Number of channel clusters \\
\hline
$\phi_v^t$ & Path angle at the transmitter\\
\hline 
$\phi_v^r$ & Path angle at the receiver\\
\hline
$f(v)$ & Function that returns the index to which a base station at $v$ belongs to\\
\hline
$\gamma_v= \frac{P_{f(v)}}{\| v\| ^\alpha}$ & Pathloss\\
\hline 
$\bf{a_{t(r)}}(.)$ & Uniform linear array vector representation at the transmitter (receiver)\\
\hline 
$\Delta_{t(r)}$ & Normalized transmit (receive) antenna separation\\
\hline 
$L_t$ & Normalized length of the transmit antenna array\\
\hline 
$\bf{n}$ & Noise vector of i.i.d $\mathcal{CN}(0,\sigma_n^2)$\\
\hline 
\end{tabular}}
\end{table*}

\begin{table*}
\centering
\caption{Channel variables from the cooperating base stations} 
\label{Table3}
\scalebox{0.85}{
\begin{tabular}{| l | l |}
\hline
Notation  & Description  \\ 
\hline 
$\mathcal{T}$ & Set of cooperating base stations with cardinality $|\mathcal{T}|=n$\\
\hline
$v_i,\ \ i\in[1:|\mathcal{T}|]$ & Points on the 2D plane corresponding to cooperating base stations location (sometimes indexed by $j$ instead of $i$)\\
\hline 
${\bf H}_{v_i}$ & MIMO channel from the cooperating base stations\\
\hline 
$h_{v_i}, \phi_{v_i}^t, \phi_{v_i}^r, \gamma_{v_i}$ &  Channel parameters of the interfering links as defined in Table~\ref{Table2} \\
\hline 
$\Omega_{\phi_{v_i}^r}$ & Directional cosine given by cos$(\phi_{v_i}^r)$\\
\hline
${\bf{X}}_{v_i}$ & Transmit signal from cooperating base stations\\
\hline
\end{tabular}}
\end{table*}

\begin{table*}
\centering
\caption{Channel variables from the interfering base stations} 
\label{Table4}
\begin{tabular}{| l | l |}
\hline
Notation  & Description  \\ 
\hline
$l_i,\ \ i\in[1:|\mathcal{T}^c|]$ & Points on the 2D plane corresponding to interfering base stations locations\\
\hline 
${\bf H}_{l_i}$ & MIMO channel from the interfering base stations\\
\hline
$h_{l_i}, \phi_{l_i}^t, \phi_{l_i}^r, \gamma_{l_i}$ & Channel parameters of the interfering links as defined in Table~\ref{Table2} \\
\hline
$\Omega_{\phi_{l_i}^r}$ & Directional cosine given by cos$(\phi_{l_i}^r)$\\
\hline
$\theta_{l_i}^t$ & Angle used by interfering base station at position $l_i$ to beam steer to a user other than typical user\\
\hline
$\Omega_{\theta_{l_i}^t}$ & Directional cosine given by cos$(\theta_{l_i}^t)$\\
\hline
${\bf{X}}_{l_i}$ &Transmit signal from interfering base stations\\
\hline
\end{tabular}
\end{table*}


\section{Coverage Probability with no Blockage}
\label{sec:network model}
\subsection{Network Model} 
Consider a $K$ tier heterogenous network where each tier is an independent two-dimensional homogenous Poisson point process (PPP). We denote the base station location process of tier $k \in [1:K]$ by $\Phi_{k}$ with density $\lambda_k$. The mmWave base stations that belong to the same tier $k$ transmit with the same power $P_k$ for $k \in [1:K]$. We study the coverage probability as experienced by the typical user located at the origin, and denote the set of cooperating base stations, which jointly transmit to the typical user, by $\mathcal{T} \in \cup_{k=1}^K \Phi_{k}$. We assume that $|\mathcal{T}|=n$, and that these $n$ base-stations correspond to those with the strongest received power at the typical user receiver. 
In the rest of the section, we first describe the channel model and then derive the output signal at the typical user receiver.
\subsection{Simplified Clustered Channel Model}
\label{sec:Clustered channel model}
A clustered channel model,~\cite{omarmmWaveprecoding},~\cite{mmWavemustapha}, is used to model the wireless channel between the base stations and the typical user located at the origin. We assume all base stations have the same number of transmit antennas $N_t$, while the receiver has $N_r$ receive antennas. 
The $N_r \times N_t$ channel matrix ${\bf H}_{v}$, between a base station located at $v \in \mathbb{R}^2$ and the typical user is the sum of $L_{v}$ clusters and is expressed as
\begin{align}\label{eq:channel matrix}
{\bf H}_{v}= \frac{\sqrt{N_t N_r}}{L_{v}} \sum_{l=1}^{L_{v}} \sqrt{ \gamma_{v,l}} h_{v,l}{\bf a_r}(\phi_{v,l}^r) \ {\bf a_t}(\phi_{v,l}^t)^*,
\end{align} 
where 
 \begin{itemize}
 \item $\gamma_{v,l}= \frac{P_{f(v)}}{{\| v\| }^{\alpha} }$ is the pathloss,
 \item $f(v)$ is a function that returns the index $k$ of the tier to which the base station at location $v$ belongs to, 
 \item $\alpha$ is the pathloss exponent, 
 \item $\| v\| $ is the distance from the base station at location $v$ to the user at the origin,
 \item $h_{v,l}$ is the complex fading channel gain, 
 \item The vectors ${\bf a_{t(r)}}(\phi_{v}^{t(r)})$ are the normalized uniform linear array (ULA) transmit and receive array response and are given by~\cite[Eq. (7.21), Eq. (7.25)]{book:davidtse} 
\begin{align}
{\bf a_{t(r)}}(\phi_{v}^{t(r)})= \frac{1}{\sqrt{N_{t(r)}}}[1, e^{- jA}, e^{- j2A},\cdots, e^{- j(N_{t(r)}-1)A}]^T 
\end{align}
where $A= 2\pi \Delta_{t(r)}  \cos({\phi_{v}^{t(r)})}$ and $\Delta_{t(r)}$ is the normalized transmit (receive) antenna separation (normalized to the unit of the carrier wavelength), at a path angle $\phi_{v}^{t(r)}$ of departure (arrival) from the base station $v$.
\end{itemize}

In the following, for simplicity, we shall consider the case $L_v=1$.

\subsection{Received Signal at the Typical User}
\label{sec:receivedsignalatthetypicaluser}
In this section we will further divide the points $v \in \mathbb{R}^2$ into a set of points $v_i$ and $l_i$ to differentiate between the location of the cooperating and interfering base stations, respectively. The $N_r \times N_t$ desired channel matrices are denoted by ${\bf H}_{v_i}$ for $i \in [1:|\mathcal{T}|=n]$, where $n$ is a positive constant, while the interfering channel matrices are denoted by ${\bf H}_{l_i}$ for $i\in [1:|\mathcal{T}^c|]$. 

\begin{figure}
\centering
\includegraphics[width=10cm]{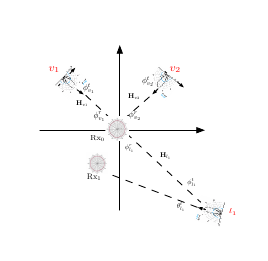}
\caption{A typical user is served by two cooperating base stations at locations $v_1$ and $v_2$, while being interfered by base station at location $l_1$.}
\label{fig:networkmodel}
\end{figure}

Fig.~\ref{fig:networkmodel} shows an example of a network model, where two base stations at locations $v_1$ and $v_2$, jointly transmit to the typical receiver located at the origin (indicated as Rx$_{0}$) in the presence of a single interfering base station at location $l_1$. The MIMO channel matrices between the cooperating base stations and the typical user are given by ${\bf H}_{v_1}$ and ${\bf H}_{v_2}$. The channel matrix between the interfering base station and the typical user is denoted by ${\bf H}_{l_1}$. The angles,  $\phi_{v_i}^t$ and $\phi_{v_i}^r$, are the cluster's angle of departure and arrival respectively from the base station $v_i$, $i\in[1:2]$, to the typical receiver. The angle $\phi_{l_1}^t$ is the angle of departure of the cluster from the interfering base station. The base station at $l_1$ uses a beam steering angle $\theta_{l_1}$ to transmit data to some other user (not the typical user) indicated as Rx$_{1}$. The received signal is
\begin{align} \nonumber
{\bf y} &= \sum_{i=1}^{|\mathcal{T}|=n}{\bf H}_{v_i}{\bf {X}}_{v_i} + \sum_{i=1}^{|\mathcal{T}^c|}{\bf H}_{l_i}{\bf {X}}_{l_i}+{\bf n} \\ 
&=\sum_{i=1}^{|\mathcal{T}|=n}\sqrt{N_tN_r}\sqrt{\gamma_{v_i}} h_{v_i}  \ {\bf a_r}(\phi_{v_i}^r) \ {\bf a_t}(\phi_{v_i}^t)^* {\bf X}_{v_i} +\sum_{i=1}^{|\mathcal{T}^c|}\sqrt{N_tN_r}\sqrt{\gamma_{l_i}} h_{l_i}  \ {\bf a_r}(\phi_{l_i}^r) \ {\bf a_t}(\phi_{l_i}^t)^*{\bf X}_{l_i}+{\bf n}\label{eq:received signal}
\end{align}
where the first sum in~\eqref{eq:received signal} is the desired signal from the mmWave cooperating base stations, while the second sum contains the signals from the interfering base stations.

The user associates with a set of cooperating base stations $\mathcal{T}$, that provide the strongest average received power as in~\cite{Nigam_CoMP_journal}. Specifically,
 \begin{align}
\mathcal{T} = \arg\max_{\{v_1\cdots v_n\} \subset \cup_{k=1}^{K} \Phi_{k}} \sum_{i=1}^n \frac{P_{f(v_i)}}{{\| v_{i}\| }^{\alpha} }
\end{align}
and $\mathcal{T}^c:= \cup_{k=1}^K \ \Phi_{k} \backslash \ \mathcal{T}$. The path angles, $\phi_{v_i}^{t}$, $i\in[1:|\mathcal{T}|]$, and $\phi_{l_i}^{t}$, $i\in[1:|\mathcal{T}^c|]$, represent the angle of departure of the desired and interfering paths respectively, while $\phi_{v_i}^{r}$, $i\in[1:|\mathcal{T}|]$, and $\phi_{l_i}^{r}$, $i\in[1:|\mathcal{T}^c|]$, represent the angle of arrival of the received path from the cooperating and interfering base stations respectively. The transmit signals, ${\bf X}_{v_i}$, $i\in[1:|\mathcal{T}|]$ and ${\bf X}_{l_i}$, $ i \in [1:|\mathcal{T}^c|]$, represent the signal from the cooperating and interfering base stations within $\mathcal{T}$ and $\mathcal{T}^c$ respectively. ${\bf n}$ is the noise vector of i.i.d $\mathcal{CN}(0, \sigma^2)$ components.
\subsection{Beam steering}
The base stations in $\mathcal {T}$ jointly send the same data to the receiver. Each base station beam steers to the typical user, therefore the transmitted signal is
\begin{align}\label{eq:transmitted signal}
{\bf X}_{v_i}={\bf a_t}(\phi_{v_i}^t)s 
\end{align}
for $i \in[1:|\mathcal{T}|=n]$, where $s$ is channel input symbol transmitted by the cooperating base stations to the typical receiver. The signals transmitted by the interfering base stations are
\begin{align}
{\bf X}_{l_i}= {\bf a_t}(\theta_{l_i}^t)s_{l_i} 
\end{align}
for $i \in [1:|\mathcal {T}^c|]$, where $s_{l_i}$ is the channel input symbol transmitted by the interfering base stations, while the angle $\theta_{l_i}^t$ is the angle used by base station $l_i$ to beam steer to a user other than the typical user, and is different from $\phi_{l_i}^t$ in general. We assume that $s$ and $s_{l_i}$ are independent zero mean and unit variance random variables.      
  
{\it Assumption 1:} We assume that the cooperating base stations have perfectly beam steered to the typical receiver: notice that the angles in~\eqref{eq:transmitted signal}, used by the base station to beam steer, are equal to the clusters' angles of departure in the desired channel in~\eqref{eq:received signal}.
\subsection{Decoding}
\label{sec:decoding}
The receiver uses a single vector ${\bf w}\in \mathbb{C}^{N_r\times 1}$ to detect the scalar transmit symbol, that is, the processed received signal is given by
\begin{align}
&{\hat y }= {\bf w^*} {\bf y}\\
&{\bf w}= \sum_{j=1}^{n}{\bf a_r}(\phi_{v_j}^r) 
\label{eq:decoder}
\end{align}

\begin{rem}
The choice of ${\bf w}$ in~\eqref{eq:decoder} is one choice of a decoder that can be implemented readily using phase shifters in the RF domain (analog processing), in fact if one wants to consider a near optimal performance, then the work in~\cite{omarmmWaveprecoding}, which finds a hybrid MIMO receiver combining algorithm and minimizes the mean-square-error between the transmitted and received signals under a set of RF hardware constraints for the resulting point-to-point channel should be generalized to finding a suitable algorithm for the downlink cooperative channel. 
\end{rem}

\noindent
{\it Assumption 2: }We assume perfect CSI of the path angles at the decoders since these angles vary slowly. However, we assume that the phases of the complex channel gains, $h_{v_i}, i \in [1:|\mathcal{T}|]$, are not available at the terminals as they change very quickly on the order of a wavelength and thus cannot be tracked. 
The performance here should be considered as an upper bound on the performance of the more realistic case with imperfect path angle assumption.
\subsection{Output Signal}
\label{sec:outputsignal}
The output signal at the typical user under the previously stated assumptions is given by
\begin{subequations}
\begin{align}
\hat{y}&= {\bf w^*} {\bf y}=\sum_{j=1}^{n}{\bf a_r}(\phi_{v_j}^r)^*\nonumber
\biggl(\sqrt{N_tN_r}\sum_{i=1}^{n}\sqrt{\gamma_{v_i}} h_{v_i}{\bf a_r}(\phi_{v_i}^r ) {\bf a_t}(\phi_{v_i}^t)^*{\bf a_t}(\phi_{v_i}^t)s\\& +\sqrt{N_tN_r} \sum_{i=1}^{ |\mathcal T^c| }\sqrt{\gamma_{l_i}} h_{l_i} {\bf a_r}(\phi_{l_i}^r) {\bf a_t}(\phi_{l_i}^t)^* {\bf a_t}(\theta_{l_i}^t) s_{l_i} \biggl)+{z}
\\&=\nonumber
 \sqrt{N_tN_r}\sum_{j=1}^{n}\sum_{i=1}^{n} \sqrt{\gamma_{v_i}} h_{v_i}
 G_r(\Omega_{\phi_{v_j}^r}-\Omega_{\phi_{v_i}^r})
 G_t(\Omega_{\phi_{v_i}^t}-\Omega_{\phi_{v_i}^t})
 s 
\\&+
\sqrt{N_tN_r}\sum_{j=1}^{n}\sum_{i=1}^{ |\mathcal T^c| }\sqrt{\gamma_{l_i}} h_{l_i}
 G_r(\Omega_{\phi_{v_j}^r}-\Omega_{\phi_{l_i}^r}  )
 G_t(\Omega_{\phi_{l_i}^t}-\Omega_{\theta_{l_i}^t})
 s_{l_i} 
+{z}
\label{eq:receivedsignal}
\end{align}
\end{subequations}
where { $z= {\bf w^* n} \sim \mathcal{C}N(0,\sigma_n^2)$}, with $\sigma_n^2=\sigma^2 {\bf w^*}{\bf w}$ and where we introduced the antenna-array-gain functions
\begin{align}\label{eq:antennagainfunction}
G_x(y) := 
{\rm e}^{j \pi \Delta_x (N_x-1) y} \
\frac{\sin (\pi \Delta_x N_x y)}
 {N_x \sin (\pi \Delta_x     y)}: |G_x(y)|\leq 1, \quad x\in\{t,r\} ,
\end{align}
\begin{align}
{\bf a}_x(\phi_1)^* {\bf a}_x(\phi_2)=G_x(\Omega_{\phi_1}-\Omega_{\phi_2}), \quad x\in\{t,r\},
\end{align}
with
$\Omega_{\phi} := \cos(\phi)$ and $\Delta_x$, $x\in\{t,r\}$ being the normalized antenna separation.
\subsection{ SINR Expression}
\label{sec:gainfunctionsandSINRexpression}
Based on~\eqref{eq:receivedsignal}, the instantaneous SINR is then given by
\begin{align}\label{eq:SINRexpressionfullygeneral}
\text{SINR} = \frac{\big|\sum\limits_{i=1}^n\sqrt{\gamma_{v_i}} h_{v_i}C_{v_i}|^2}
{\frac{\sigma_n^2}{N_tN_r}
+\sum\limits_{i=1}^{|\mathcal{T}^c|} \gamma_{l_i} |h_{l_i}|^2 |D_{l_i}|^2\bigl |
G_t(\Omega_{\phi_{l_i}^t}-\Omega_{\theta_{l_i}^t})
|^2},\end{align}
where $C_{v_i}:= \sum_{j=1}^{n}G_r(\Omega_{\phi_{v_j}^r}-\Omega_{\phi_{v_i}^r})$ and $D_{l_i}:= \sum_{j=1}^{n}G_r(\Omega_{\phi_{v_j}^r}-\Omega_{\phi_{l_i}^r})$.

Assuming a single antenna receiver with $N_r=1$ ({$C_{v_i}=D_{l_i}=n$ and $\sigma_n^2= n^2 \sigma^2$), the SINR in~\eqref{eq:SINRexpressionfullygeneral} simplifies to
\begin{align}
\label{eq:SINRexpressionomnicase}
\text{SINR} 
&= \frac{\big|\sum\limits_{i=1}^n\sqrt{\gamma_{v_i}} h_{v_i}|^2}
{\frac{\sigma^2}{N_t}
+\sum\limits_{i=1}^{|\mathcal{T}^c|} \gamma_{l_i} |h_{l_i}|^2 \bigl |
G_t(\Omega_{\phi_{l_i}^t}-\Omega_{\theta_{l_i}^t})
|^2},
\end{align}

The coverage probability for the typical user with SINR as in~\eqref{eq:SINRexpressionomnicase} will be derived under the assumption that all angles are independent and uniformly distributed between $[-\pi,+\pi]$. We will first assume that the receiver is present in a rich scattering environment (Rayleigh fading assumption), and in this scenario the coverage probability is given in Th.~\ref{theorem:coverageprobabilityULAanalysis}. The case where each tier experiences blockage is then considered and the coverage probability is derived accordingly and is given in Th.~\ref{theorem:coverageprobabilitywithblockage}. The Nakagami fading distribution is then used to model the less severe fading distribution on the direct cooperating links and two upper bounds on the coverage probability are then derived and are given in Th.~\ref{theorem:coverageprobabilitywithblockageNakagamifading} and Corollary~\ref{corr: SNRnakagmi}. The case where there is no small scale fading for the direct cooperating links is then considered and the coverage probability for this case is given in  Th.~\ref{thm:nofadingcaseSNRcase}. Future work includes deriving the coverage probability for all the different cases described above using~\eqref{eq:SINRexpressionfullygeneral}, i.e, multiple antennas at the receivers.

\subsection{Performance Analysis}
\label{sec:performance analysis no blockage ULA analysis}
\begin{thm}\label{theorem:coverageprobabilityULAanalysis}
The coverage probability for the typical user, with a single antenna, in a downlink mmWave heterogenous network with $K$ tiers, with base stations having ULA with $N_t$ antennas, of which $n$ jointly transmit to it, is given by
\begin{align} \label{eq:coverageprobabilityULAanalysis}
\normalfont \mathbb{P}({\text{SINR} > T})= \int\limits_{0<\gamma'_1<\cdots<\gamma'_n<+\infty} \mathcal{L}_I\bigg(\frac{T}{\sum_{i\leq n}\gamma_{i}'^{-1}}\bigg)\mathcal{L}_N\bigg(\frac{T }{\sum_{i\leq n}\gamma_{i}'^{-1}}\bigg) f_{\Gamma'}(\gamma')d\gamma'
\end{align}
where  $\gamma_{i}'=\frac{\| v_i\| ^\alpha}{P_{f(v_i)}}$ for  $i \in [1:n]$, and the Laplace transform of the interference and the noise are given by 
\begin{align}\label{eq:Laplaceinterferencecase1}
\mathcal{L}_I(s)= \normalfont \text{exp} \Bigg( -\int_{\gamma_n'}^{\infty} \Bigg[ 1- &\int_{-2}^{+2}\Bigg(\frac{1}{1+s |G_t(\varepsilon)|^2 v^{-1} }\biggl) f_{\Upsilon}(\varepsilon)\ d\varepsilon \Bigg] \ \lambda(v) \, dv \Bigg),\\
&\mathcal{L}_{N}(s) =e^{-s\sigma^2/N_t}, \label{eq:Laplaceofnoise}
\end{align} 
where the antenna array gain $G_t(\varepsilon)$ is given by~\eqref{eq:antennagainfunction} and the probability density function of $\Upsilon_i=\Omega_{\phi_{l_i}^t}-\Omega_{\theta_{l_i}^t}$ is 
 \begin{align}f_{\Upsilon}(\varepsilon)= \int_{\max\{-1,-1-\varepsilon\}}^{\min\{1,1-\varepsilon\}} \left( \frac{1}{\pi^2\sqrt{1-(\varepsilon+y)^2}} \frac{1}{\sqrt{1-y^2}}\right) \, dy, \end{align}
and  $f_{\Gamma'} (\gamma')$ is the joint distribution of $\gamma'=[\gamma_{1}',\cdots, \gamma_{n}'] $ and is given by 
\begin{align}\label{eq:distributionofnbasestations}
f_{\Gamma'} (\gamma') =\prod_{i=1}^n \lambda(\gamma_i')e^{-\Lambda(\gamma_n')}
\end{align}
while the intensity and intensity measure are given by
\begin{align}
&\lambda(v)= \sum_{k=1}^K \lambda_k\frac{2\pi}{\alpha}P_{k}^{\frac{2}{\alpha}}v^{\frac{2}{\alpha}-1}, \\
&\Lambda(\gamma_n')= \sum_{k=1}^K \pi \lambda_k P_k \gamma_n'^{\frac{2}{\alpha}}
\end{align}
\end{thm}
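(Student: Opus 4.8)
The plan is to condition on the normalized pathlosses $\gamma'=(\gamma'_1,\ldots,\gamma'_n)$ of the $n$ cooperating base stations and exploit the Rayleigh fading assumption to reduce the coverage event to the complementary CDF of an exponential random variable. With $h_{v_i}\sim\mathcal{CN}(0,1)$ i.i.d., the superposition $\sum_{i=1}^{n}\sqrt{\gamma_{v_i}}\,h_{v_i}$ is, conditionally on $\gamma'$, a zero-mean complex Gaussian of variance $\sum_{i\le n}\gamma_{v_i}=\sum_{i\le n}\gamma_i'^{-1}$, so the squared modulus $S$ in the numerator of~\eqref{eq:SINRexpressionomnicase} is exponential with mean $\sum_{i\le n}\gamma_i'^{-1}$. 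Writing $I$ for the interference sum and $N=\sigma^2/N_t$ for the noise term in the denominator, and using the independence of $S$ from $(I,\gamma')$, we get \[ \mathbb{P}(\text{SINR}>T\mid\gamma',I)=\mathbb{P}\bigl(S>T(I+N)\mid\gamma',I\bigr)=\exp\!\Bigl(-\tfrac{T(I+N)}{\sum_{i\le n}\gamma_i'^{-1}}\Bigr). \] Taking expectations over $I$ and then over $\gamma'$, and recalling $\mathcal{L}_I(s)=\mathbb{E}[e^{-sI}]$ and $\mathcal{L}_N(s)=e^{-sN}$, yields~\eqref{eq:coverageprobabilityULAanalysis}; it then remains to identify $\mathcal{L}_I$ and $f_{\Gamma'}$.

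For $\mathcal{L}_I(s)$ I would pass to the normalized pathloss process $\Theta=\cup_{k=1}^{K}\Theta_k$, a Poisson process on $(0,\infty)$. By the mapping theorem applied to each homogeneous PPP $\Phi_k$ on $\mathbb{R}^2$ (first $v\mapsto\|v\|^2$, a one-dimensional PPP of constant intensity $\pi\lambda_k$, then $u\mapsto u^{\alpha/2}/P_k$), $\Theta$ has intensity $\lambda(v)=\sum_{k}\lambda_k\frac{2\pi}{\alpha}P_k^{2/\alpha}v^{2/\alpha-1}$ and the stated intensity measure $\Lambda$. Since the $n$ cooperating base stations are those with the strongest received power, they are the $n$ smallest atoms of $\Theta$, and the interferers form the restriction of $\Theta$ to $(\gamma'_n,\infty)$; conditionally on $\gamma'$, this restriction is again Poisson with intensity $\lambda$ on $(\gamma'_n,\infty)$. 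The interference is $I=\sum_{v\in\Theta\cap(\gamma'_n,\infty)}v^{-1}|h_v|^2|G_t(\Upsilon_v)|^2$ with $|h_v|^2\sim\text{Exp}(1)$ i.i.d.\ and $\Upsilon_v=\Omega_{\phi_{l}^t}-\Omega_{\theta_l^t}$ i.i.d.\ and independent of $\Theta$. Applying the probability generating functional of the Poisson process, then averaging over $|h_v|^2$ (which contributes the factor $1/(1+sv^{-1}|G_t(\varepsilon)|^2)$) and over $\Upsilon_v$, produces exactly~\eqref{eq:Laplaceinterferencecase1}.

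To obtain $f_{\Upsilon}$, I would note that $\phi_l^t$ and $\theta_l^t$ are independent and uniform on $[-\pi,\pi]$, so each directional cosine has the arcsine density $\frac{1}{\pi\sqrt{1-x^2}}\mathbf{1}_{\{|x|<1\}}$; the density of the difference $\Upsilon=\Omega_{\phi_l^t}-\Omega_{\theta_l^t}$ is the corresponding convolution, which is the stated integral supported on $[-2,2]$. Finally, $f_{\Gamma'}$ is the joint density of the $n$ smallest points of the Poisson process $\Theta$: for $0<\gamma'_1<\cdots<\gamma'_n$ the event requires one atom in each infinitesimal $d\gamma'_i$ and no atom in $(0,\gamma'_n)$ outside $\bigcup_i d\gamma'_i$, which by the Poisson property has probability $\bigl(\prod_{i=1}^n\lambda(\gamma'_i)\,d\gamma'_i\bigr)e^{-\Lambda(\gamma'_n)}$, i.e.~\eqref{eq:distributionofnbasestations}. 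Combining the three ingredients gives the theorem.

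The step I expect to be the main obstacle is the conditioning in the interference computation: one must argue carefully that, given that the $n$ smallest normalized pathlosses equal $\gamma'_1<\cdots<\gamma'_n$, the interfering atoms still form a Poisson process on $(\gamma'_n,\infty)$ with the original intensity $\lambda$ and are independent of the cooperating atoms. This is a consequence of the independence of a Poisson process over disjoint regions together with the fact that prescribing exactly $n$ atoms at fixed locations in $(0,\gamma'_n)$ does not alter the law of the process on $(\gamma'_n,\infty)$; I would make this rigorous via the Janossy/factorial-moment description of the Poisson process, or equivalently through order statistics of the mapped one-dimensional process. The remaining pieces — the exponential-CCDF reduction, the PGFL evaluation, the convolution for $f_\Upsilon$, and the mapping-theorem computation of $\lambda$ and $\Lambda$ — are routine.
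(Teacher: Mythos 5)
Your proposal is correct and follows essentially the same route as the paper's proof in Appendix A: conditioning on $\gamma'$, using the exponential CCDF of the Rayleigh-faded sum to produce the product of Laplace transforms, the mapping theorem for $\lambda(v)$ and $\Lambda$, the PGFL plus averaging over $|h|^2$ and $\Upsilon$ for $\mathcal{L}_I$, the arcsine-density convolution for $f_\Upsilon$, and the order-statistics/null-probability argument for $f_{\Gamma'}$. The conditioning subtlety you flag (the interferers remaining Poisson with intensity $\lambda$ on $(\gamma_n',\infty)$ given the $n$ strongest points) is handled implicitly in the paper in exactly the way you describe, so there is no substantive difference.
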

\begin{proof}
Please refer to Appendix~\ref{appendix:coverageprobabilityULAanalysis} for the proof. 
\end{proof}

\subsection{Numerical Results}
\label{sec:numerical results no blockage ULA analysis}
{\bf Example 1:}
In this section we numerically evaluate Th.~\ref{theorem:coverageprobabilityULAanalysis}. We compute the coverage probability for the typical user in a mmWave CoMP heterogenous network and compare it to the case with no base station cooperation. We consider a two tier network, $K=2$ with parameters given in Table~\ref{table:TableNumericalExample1}. 
The noise variance is given by $\sigma^2(\text{dBm})= -174+10\log_{10}(\text{BW})+ \text{NF (dB)}$, { where BW and NF are abbreviations for bandwidth and noise figure respectively}. In Fig.~\ref{fig:noblockageexample} the coverage probability in~\eqref{eq:coverageprobabilityULAanalysis} for $n=2$ and $n=1$ is plotted. In the absence of blockage, the numerical results show that the increase in coverage probability with cooperation for the case of $N_t=8,16$ antennas is almost  $11\%$ at $T=5$ dB. While the increase is $10 \%$ for $N_t=32,64$ at $T=10$ dB.
As expected, an increase in the number of antennas at the base stations increases the coverage probability. For example, for the same threshold $T=10$ dB, the coverage probability with cooperation and with $N_t=16$ is approximately 0.5 while for $N_t=32$ is 0.65. The increase in coverage probability can be interpreted as follows: the mmWave tier is relatively denser than that of a microwave tier, therefore the number of interfering base stations increases too. Thus, with cooperation limits interference and the coverage probability consequently increases.
\begin{figure}
\centering
\includegraphics[width=10cm]{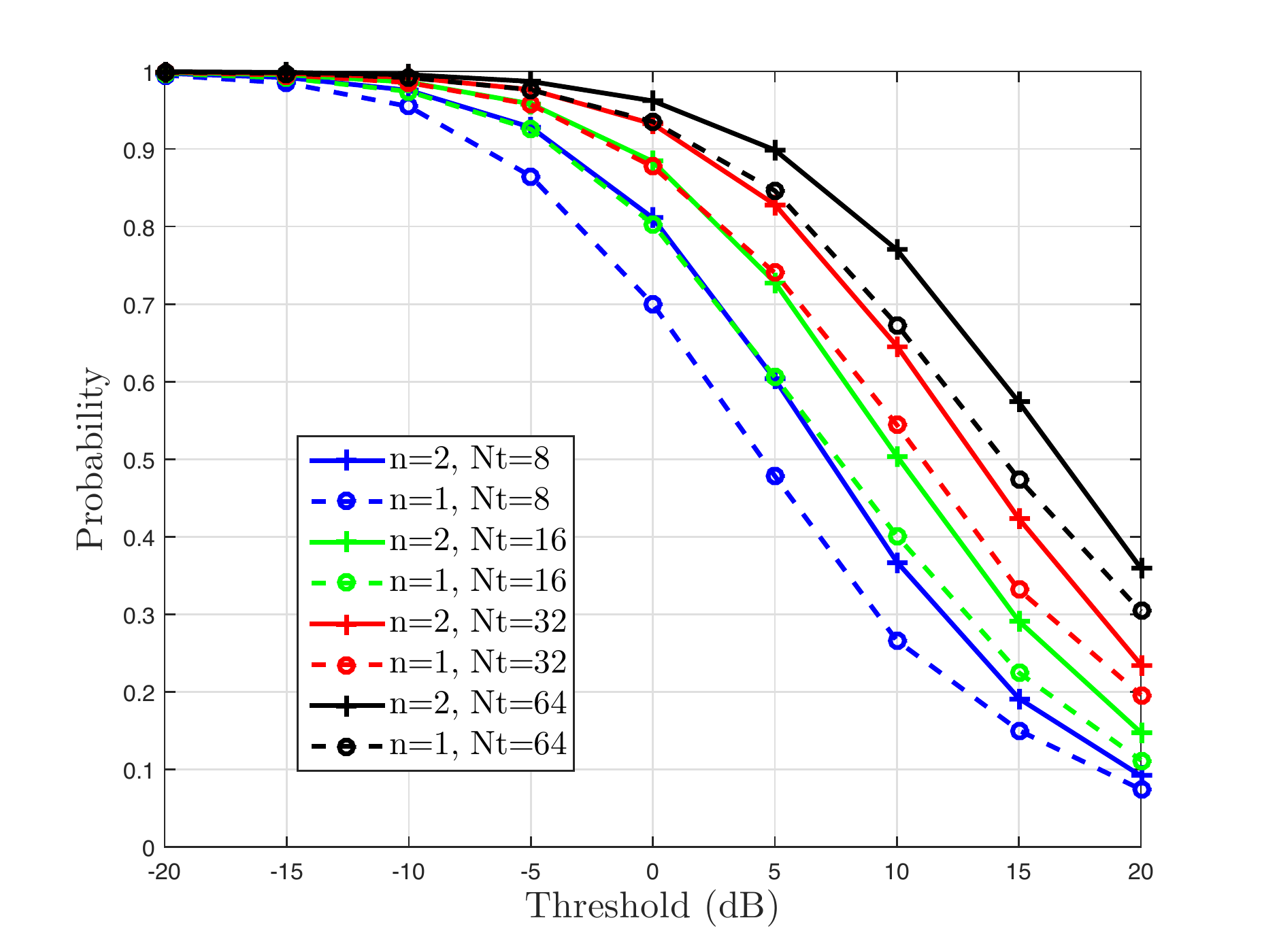}
\caption{Coverage probability in~\eqref{eq:coverageprobabilityULAanalysis} for a two-tier network with parameters in Table~\ref{table:TableNumericalExample1} with two cooperating base stations $(n=2)$ and without base station cooperation $(n=1)$ and for different number of antennas.}
\label{fig:noblockageexample}
\end{figure}

\begin{rem}
The authors in~\cite{gauravmicrowavecomp} compare a two tier network with parameters (power, noise, intensities) suitable for microwave deployment with and without base station cooperation ($n$=2,3); an increase of 17 \% was noted at a threshold $T=0$ dB for the case of CoMP with two cooperating base stations when compared to the case of no cooperation. We shall show that a comparable gain (16\%-18\%) to the one reported in~\cite{gauravmicrowavecomp} can be attained with two cooperating base stations with non fading channel gains in Section~\ref{sec:coveragenofading}. Thus, the Rayleigh fading assumption considered here provides a worse case scenario.
\end{rem}

\begin{table*}
\parbox{.45\linewidth}{
\centering
\caption{Tier for Fig.~\ref{fig:noblockageexample} (Example 1)} 
\label{table:TableNumericalExample1}
\begin{tabular}{| l | l |}
\hline
Parameter  & Value  \\ 
\hline 
Intensity& $\lambda_1= (150^2\pi)^{-1}$, $\lambda_2=(50^2\pi)^{-1}$\\
\hline
Power &$P_1=1$ W (30 dBm) and $P_2=0.25$ W\\
\hline 
Path Loss& $\alpha=3$ \\
\hline 
 Antennas & $N_t =8, 16, 32, 64$\\
\hline 
Noise Figure (NF) & 10 dB\\
\hline
Blockage & Not Applicable\\
\hline
Bandwidth & 1 GHz\\
\hline
\end{tabular}
}
\hfill
\parbox{.45\linewidth}{
\centering
\caption{Tier parameters for Fig.~\ref{fig:blockageexample1} (Example 2)} 
\label{table:TableNumericalExamplewithblockage1}
\begin{tabular}{| l | l |}
\hline
Parameter  & Value  \\ 
\hline 
Intensity& $\lambda= (80^2\pi)^{-1}$\\
\hline
Power &$P_1=1$ W \\
\hline 
Path Loss& $\alpha_1=2, \alpha_2=4$ \\
\hline 
 Antennas & $N_t =16$\\
\hline 
Noise Figure & 5 dB\\
\hline
Blockage & $\beta=0.006, 0.003, 0.0143$\\
\hline
Bandwidth & 1 GHz\\
\hline
\end{tabular}}
\end{table*}



\section{Coverage probability with Blockage}
\label{sec:coverageprobabilitywithblockage}
\subsection{Network model} 
\label{sec:network model blockage}
In this section we again consider a $K$-tier heterogenous network where each tier is an independent two-dimensional homogenous (PPP). The base station location process of each tier is denoted by $\Phi_{k}$ with density $\lambda_k$ for $k \in [1:K]$. Each tier is characterized by a non-negative blockage constant $\beta_k$ for $k\in[1:K]$ { (determined by the density and average size of objects within the tier and where the average LOS range in a tier $k\in[1:K]$ is consequently given by $1/\beta_k$)} as defined in~\cite{taijournalrandomshapetheory} and used in~\cite{taimmWavedenseconference},~\cite{taimmWavedensejournal}. Consequently, after defining the parameter $\beta_k$ for $k\in[1:K]$, we have that the probability of the communication link being a LOS link (no blockage on the link) within tier $k$ is $\mathbb{P}(\text{LOS}_k)= e^{-\beta_k r}$, where $r$ represents the length of the communication link, while the probability of a link being NLOS is $\mathbb{P}(\text{NLOS}_k)=1-\mathbb{P}(\text{LOS}_k)$. The LOS and NLOS links will have different pathloss exponents, $\alpha_1$ and $\alpha_2$, respectively, and are the same for all $k\in[1:K]$. 
With the assumption of blockage the Laplace transform of the interference in~\eqref{eq:Laplaceinterferencecase1} and the joint distribution of the cooperating base stations in~\eqref{eq:distributionofnbasestations} have to be re-derived.


\subsection{Performance Analysis}
\label{sec:performance analysis with blockage}
\begin{thm}\label{theorem:coverageprobabilitywithblockage}
The coverage probability for the typical user, with a single antenna, in a downlink mmWave heterogenous network with $K$ tiers, and where each tier has a blockage parameter $\beta_k$, with $n$ base stations having ULA with $N_t$ antennas, jointly transmitting to it is given by~\eqref{eq:coverageprobabilityULAanalysis},~\eqref{eq:Laplaceinterferencecase1},~\eqref{eq:Laplaceofnoise}
but where now the intensity $\lambda(v)$ in~\eqref{eq:Laplaceinterferencecase1} is given by 
\begin{align}\label{eq:intensitywithblockage}
\lambda(v)=\sum_{k=1}^KA_kv^{\frac{2}{\alpha_1}-1} e^{-a_kv^{\frac{1}{\alpha_1}}} +B_kv^{\frac{2}{\alpha_2}-1}(1-e^{ -b_k v^{\frac{1}{\alpha_2}}})
\end{align}
where $A_k= \pi \lambda_k \frac{2}{\alpha_1}P_k^{\frac{2}{\alpha_1}}, a_k=\beta_kP_k^{\frac{1}{\alpha_1}}, b_k=\beta_kP_k^{\frac{1}{\alpha_2}}, B_k=\pi \lambda_k \frac{2}{\alpha_2}P_k^{\frac{2}{\alpha_2}}$, and the distribution of the distance of $n$ closest base stations is given by~\eqref{eq:distributionofnbasestations} but where 
\begin{align} \nonumber
&\Lambda(\gamma_n')= \sum_{k=1}^K\frac{2\pi \lambda_k}{\beta_k^2}\left(1- e^{-\beta_k ( \gamma_n'P_k)^{\frac{1}{\alpha_1}}}(1+\beta_k ( \gamma_n'P_k)^{\frac{1}{\alpha_1}})\right) \\&\hspace{3cm}+ \pi \lambda_k ( \gamma_n'P_k)^{\frac{2}{\alpha_2}}-\frac{2\pi \lambda_k}{\beta_k^2}\left(1- e^{-\beta_k ( \gamma_n'P_k)^{\frac{1}{\alpha_2}}}(1+\beta_k ( \gamma_n'P_k)^{\frac{1}{\alpha_2}})\right)\label{eq:intensitymeasurewithblockage}
\end{align}
\end{thm}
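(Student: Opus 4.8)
The plan is to follow exactly the argument of Theorem~\ref{theorem:coverageprobabilityULAanalysis} and show that the only places where blockage enters are (i) the intensity measure $\Theta = \cup_k \Theta_k$ of the normalized pathloss process, and (ii) through it the intensity $\lambda(v)$ and the intensity measure $\Lambda(\gamma_n')$; everything downstream (the conditioning on the $n$ strongest base stations, the Laplace transforms $\mathcal{L}_I$ and $\mathcal{L}_N$, the uniform-angle averaging producing $f_\Upsilon$, and the final integral~\eqref{eq:coverageprobabilityULAanalysis}) is formally identical. So the real content is recomputing the intensity of $\Theta_k$ when the pathloss law is the random LOS/NLOS law, and then integrating it to get $\Lambda$.

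First I would set up the displacement/mapping argument. Each tier $\Phi_k$ is a homogeneous PPP of intensity $\lambda_k$ on $\mathbb{R}^2$; with the blockage model, a base station at distance $r$ is LOS (pathloss $P_k r^{-\alpha_1}$, equivalently normalized pathloss $\gamma' = r^{\alpha_1}/P_k$) with probability $e^{-\beta_k r}$ and NLOS (normalized pathloss $\gamma' = r^{\alpha_2}/P_k$) independently with probability $1-e^{-\beta_k r}$. By the marking theorem the LOS and NLOS points form independent PPPs; by the mapping theorem each maps to a PPP on $(0,\infty)$ in the $\gamma'$ variable. For the LOS part, starting from the radial intensity $2\pi\lambda_k r\,dr$, the change of variables $\gamma' = r^{\alpha_1}/P_k$, i.e. $r = (\gamma' P_k)^{1/\alpha_1}$, $dr = \frac{1}{\alpha_1}P_k^{1/\alpha_1}\gamma'^{1/\alpha_1-1}d\gamma'$, together with the thinning factor $e^{-\beta_k r}=e^{-\beta_k(\gamma'P_k)^{1/\alpha_1}}$, yields a term $A_k\, v^{2/\alpha_1-1} e^{-a_k v^{1/\alpha_1}}$ with $A_k=\pi\lambda_k\frac{2}{\alpha_1}P_k^{2/\alpha_1}$ and $a_k=\beta_k P_k^{1/\alpha_1}$. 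The NLOS part is identical but with $\alpha_1\to\alpha_2$ and thinning factor $1-e^{-\beta_k r}$, giving $B_k v^{2/\alpha_2-1}(1-e^{-b_k v^{1/\alpha_2}})$. Summing over tiers gives $\lambda(v)$ as in~\eqref{eq:intensitywithblockage}.

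Next I would obtain $\Lambda(\gamma_n')=\int_0^{\gamma_n'}\lambda(v)\,dv$ by integrating termwise. The NLOS-without-exponential piece integrates to the elementary $\pi\lambda_k(\gamma_n'P_k)^{2/\alpha_2}$. For the two pieces of the form $\int_0^{\gamma_n'} v^{2/\alpha-1}e^{-c v^{1/\alpha}}dv$ (one from LOS directly, one from the $e^{-b_k v^{1/\alpha_2}}$ inside the NLOS term), substitute $u=v^{1/\alpha}$, reducing to $\alpha\int_0^{(\gamma_n')^{1/\alpha}} u e^{-cu}du = \frac{\alpha}{c^2}\big(1-e^{-c\rho}(1+c\rho)\big)$ with $\rho=(\gamma_n')^{1/\alpha}$; plugging back $c=\beta_k P_k^{1/\alpha_j}$ and the prefactors $A_k,B_k$ produces the two $\frac{2\pi\lambda_k}{\beta_k^2}(1-e^{-\beta_k(\gamma_n'P_k)^{1/\alpha_j}}(1+\beta_k(\gamma_n'P_k)^{1/\alpha_j}))$ terms, with a minus sign on the NLOS one, matching~\eqref{eq:intensitymeasurewithblockage}. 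Then $f_{\Gamma'}$ keeps the form~\eqref{eq:distributionofnbasestations}, $f_{\Gamma'}(\gamma')=\prod_{i=1}^n\lambda(\gamma_i')e^{-\Lambda(\gamma_n')}$, since that expression is just the law of the $n$ smallest points of a PPP with intensity $\lambda(\cdot)$ on the line and does not depend on the particular functional form of $\lambda$.

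Finally I would argue that $\mathcal{L}_I$, $\mathcal{L}_N$ and the outer integral are unchanged in structure: conditioned on $\gamma' = [\gamma_1',\dots,\gamma_n']$ the signal term is deterministic in the $\gamma_i'$ (after the Rayleigh/exponential averaging of $|\sum_i\sqrt{\gamma_{v_i}}h_{v_i}|^2$, which uses only that the $h_{v_i}$ are i.i.d.\ complex Gaussian), the interference is a shot-noise functional over the residual PPP $\Theta\cap(\gamma_n',\infty)$ whose Laplace functional gives exactly~\eqref{eq:Laplaceinterferencecase1} with $\lambda(v)$ now the blocked intensity, and the interferer beam-misalignment density $f_\Upsilon$ comes purely from the independent uniform angles and is untouched by blockage. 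I expect the main obstacle to be bookkeeping rather than conceptual: carefully justifying that the ``$n$ strongest received power'' set $\mathcal{T}$ corresponds to the $n$ smallest $\gamma'$ values of the \emph{superposed} process $\Theta$ even though LOS and NLOS points of the same tier interleave, and checking the integrability/finiteness conditions (e.g.\ that $\Lambda(\gamma_n')<\infty$, which the $2/\alpha_j$ exponents with $\alpha_j>2$ or the exponential cutoff ensure) so that the PPP order-statistics formula and the Laplace-functional manipulations are legitimate; the change-of-variables integrals themselves are routine.
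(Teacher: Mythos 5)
Your proposal is correct and follows essentially the same route as the paper: blockage is handled by combining the LOS/NLOS thinning with the Mapping Theorem to recompute the intensity and intensity measure of the normalized-pathloss process, all other steps of Theorem~\ref{theorem:coverageprobabilityULAanalysis} carrying over unchanged. The only cosmetic difference is that the paper first computes $\Lambda_k([0,r])$ by integrating over the original plane and then differentiates to obtain $\lambda_k(v)$, whereas you derive $\lambda(v)$ by a change of variables and then integrate to get $\Lambda(\gamma_n')$; your explicit appeal to the marking theorem and the integrability check are fine additions but not a different argument.
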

\begin{proof}
Please refer to Appendix~\ref{appendix:coveragewithblockage} for detailed proof.
\end{proof}
\subsection{Numerical Results}
In this section we numerically evaluate Th.~\ref{theorem:coverageprobabilitywithblockage}. We compute the coverage probability for the typical user in a mmWave CoMP heterogenous network in the presence of blockage. The examples provided illustrate scenarios when cooperation is beneficial (in terms of increasing the coverage probability) and examples when the increase is not substantial. Numerical results suggest that the former is in fact the case when the mmWave network is dense (captured by the tier radius and consequently its intensity) - a feature expected for millimeter wave networks~\cite{survey:whatwill5gbe},~\cite{taimmWavedenseconference}. This can be interpreted as follows, with extreme densification, the number of LOS interfering base stations increases and thus interference increases, a remark also noted in~\cite{taimmWavedenseconference}. Therefore, cooperation limits the interference by increasing the number of serving base stations and therefore providing higher coverage probabilities.

{\bf Example 2:}
In Fig.~\ref{fig:blockageexample1} we plot the coverage probabilities for a tier with parameters given in Table~\ref{table:TableNumericalExamplewithblockage1}. A tier with average radius of $80$ meters is considered with blockage parameters $\beta=0.003, 0.006, 0.0143$ (corresponding to average LOS range which is greater than 80 m for $\beta=0.003$ and an average range that cannot reach a user at the edge for $\beta$=0.0143). 
The coverage probability for this one tier CoMP mmWave network with $n=2$ and with base station power available $P=1$W is compared with the following cases, Case 1) a one tier mmWave network with no base station cooperation $(n=1)$, with a base station transmit power $P=1$W, Case 2) a one tier mmWave network with no base station cooperation $n=1$, but with base station transmit power equal to the sum of transmit power if two base stations were to cooperate $P=2$W. 
The increase in coverage probability for both cases is approximately an increase of 0.12 in probability for a threshold $T=5,10$ dB. Moreover, it is interesting to note that an increase in the tier blockage parameter (shorter range of LOS links) would increase the coverage probability. This can be interpreted as follows: an increase in the blockage parameter increases the probability of blockage of the interfering LOS base stations, resulting in higher coverage probabilities. The curves corresponding to the coverage probabilities to Case 1 and Case 2 are very close since the power at the interfering base stations has also increased with this assumption (which also means that this network is not noise limited). We shall show in the subsequent example, that this observation doesn't hold for a less dense tier with a high probability of NLOS base stations.
\begin{figure}
\centering
\includegraphics[width=10cm]{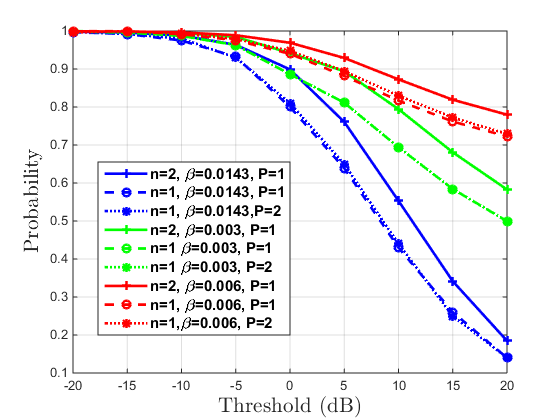}
\caption{Coverage probability in Th.~\ref{theorem:coverageprobabilitywithblockage} for a one-tier network with parameters in Table~\ref{table:TableNumericalExamplewithblockage1} with two cooperating base stations $(n=2)$ and without base station cooperation $(n=1)$ and for different blockage parameters.}
\label{fig:blockageexample1}
\end{figure}

{\bf Example 3:}
A tier with an average radius of 250 meters and with tier parameters given in Table~\ref{table:TableNumericalExamplewithblockage2} and with a blockage parameter $\beta=0.02$ (corresponding to a high probability of blockage and average LOS range of 50 m) is plotted in Fig.~\ref{fig:blockageexample2}. The increase in coverage probability due to cooperation in this case is minimal and is approximately 0.05 at all thresholds. This can be interpreted as follows: 1) a tier with high blockages will also block interfering signals and 2) when the density of base stations is not too dense, the $n$ strongest base stations are not {\it too strong} to cause a substantial increase in coverage probability due to the fact that distance at which these cooperating base stations are located increases too (thus received power decreases). As seen in Fig.~\ref{fig:blockageexample2}, increasing the power at the base station (but no cooperation) provides higher coverage probability than the case with base station cooperation. We shall show that the observations made for this example do not hold when there is no fading on the direct links from the cooperating base stations in Section~\ref{sec:coveragenofading}.
\begin{figure}
\centering
\includegraphics[width=10cm]{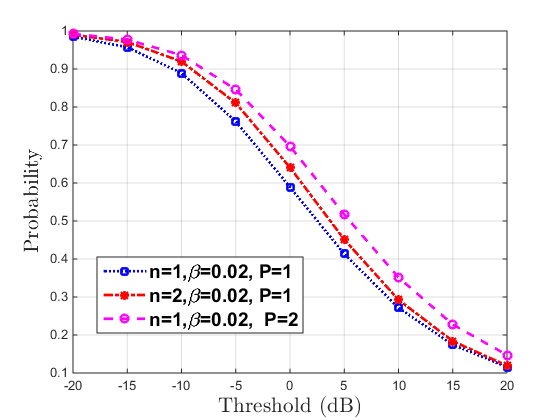}
\caption{Coverage probability in Th.~\ref{theorem:coverageprobabilitywithblockage} for a one-tier network with parameters in Table~\ref{table:TableNumericalExamplewithblockage2} with two cooperating base stations $(n=2)$ and without base station cooperation $(n=1)$.}
\label{fig:blockageexample2}
\end{figure}

{\bf Example 4:}
{ In an attempt to understand whether the observations hold for a network which is not noise limited as in Example 2 but for a larger number of antennas at the base stations, we consider the example of a dense mmWave tier network (average radius of 50 m) and $N_t=64$ and with network parameters as in Table~\ref{table:TableNumericalExamplewithblockage3}. In Fig.~\ref{fig:blockageexample3} we plot the coverage probabilities corresponding for the different cases which are described in Example 1. The observations made in Example 1 hold for this example too with almost the same increase (11\%) in coverage probability for a threshold $T=10,15$ dB.  }

\begin{figure}
\centering
\includegraphics[width=10cm]{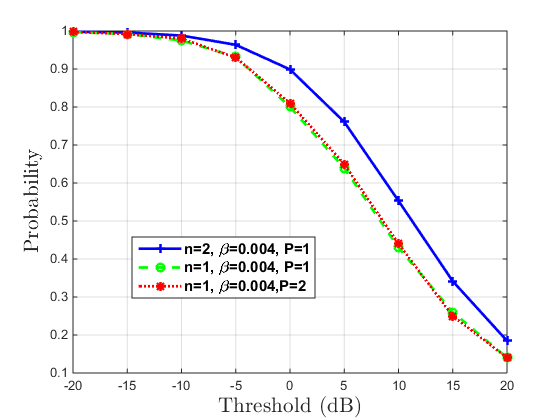}
\caption{Coverage probability in Th.~\ref{theorem:coverageprobabilitywithblockage} for a one-tier network with parameters in Table~\ref{table:TableNumericalExamplewithblockage3} with two cooperating base stations $(n=2)$ and without base station cooperation $(n=1)$.}
\label{fig:blockageexample3}
\end{figure}

\begin{table*}
\parbox{.45\linewidth}{
\centering
\caption{Tier parameters for Fig.~\ref{fig:blockageexample2} (Example 3)} 
\label{table:TableNumericalExamplewithblockage2}
\begin{tabular}{| l | l |}
\hline
Parameter  & Value  \\ 
\hline 
Intensity& $\lambda= (250^2\pi)^{-1}$\\
\hline
Power &$P_1=1$ W \\
\hline 
Path Loss& $\alpha_1=2, \alpha_2=4$ \\
\hline 
 Antennas & $N_t =64$\\
\hline 
Noise Figure & 5 dB\\
\hline
Blockage & $\beta=0.02$\\
\hline
Bandwidth & 1 GHz\\
\hline
\end{tabular}
}
\hfill
\parbox{.45\linewidth}{
\centering
\caption{Tier parameters for Fig.~\ref{fig:blockageexample3} (Example 4)} 
\label{table:TableNumericalExamplewithblockage3}
\begin{tabular}{| l | l |}
\hline
Parameter  & Value  \\ 
\hline 
Intensity& $\lambda= (50^2\pi)^{-1}$\\
\hline
Power &$P_1=1$ W \\
\hline 
Path Loss& $\alpha_1=2, \alpha_2=4$ \\
\hline 
 Antennas & $N_t =64$\\
\hline 
Noise Figure & 5 dB\\
\hline
Blockage & $\beta=0.004$\\
\hline
Bandwidth & 1 GHz\\
\hline
\end{tabular}}
\end{table*}


\section{Coverage Probability with Nakagami fading and blockage}
\label{sec:Coverage Probability with Nakagami fading and blockage}
In this section we consider the same network model as in Section~\ref{sec:network model blockage} but choose a different fading distribution on the channel gains from the strongest cooperating base stations, similar to~\cite{salammmWavecoverage}. In particular we consider Nakagami fading with parameter $m$, while keeping the same assumption of Rayleigh fading for the interfering channel gains. 

Using the coverage probability expression for a general fading distribution~\cite[{Eq. 2.11}]{baccellialoha}, we are then able to derive an upper bound on the coverage probability for this network. We then consider another upper bound by evaluating the network in the absence of interference. The coverage probability is then defined as the probability that the signal-to-noise-ratio (SNR) is greater than a certain threshold. Using complex analysis methods of integration we derive closed form for the coverage probability.

\subsection{Performance Analysis}
\label{sec:performance analysis with blockage and Nakagami fading}
\begin{thm}\label{theorem:coverageprobabilitywithblockageNakagamifading}
An upper bound on the coverage probability for the typical user, in a downlink mmWave heterogenous network with blockage with $K$ tiers, where each tier has a blockage parameter $\beta_k$, and with $n$ base stations having ULA with $N_t$ antennas jointly transmitting to it, with the assumption of Nakagami fading on the cooperating channel gains is
\begin{align} \label{eq:coverageprobabilitywithblockageNakagamifading}
\normalfont \mathbb{P}({\text{SINR} > T})\leq\int\limits_{0<\gamma'_1<\cdots<\gamma'_n<+\infty}f_{\Gamma'}(\gamma')\int_{-\infty}^{\infty}\mathcal{L}_I( 2j\pi T'  s)\mathcal{L}_{N}(2j \pi T' s)\frac{\mathcal{L}_{S}^{\rm{UP}}( - 2 j\pi   s)-1}{2j\pi s} ds \ d\gamma'
\end{align}
where  $\gamma_{i}'=\frac{\| v_i\| ^\alpha}{P_{f(v_i)}}$ for  $i \in [1:n]$ and $T'= \frac{T}{\sum_{i\leq n} \gamma_i'^{-1}} $ while the joint distribution of $\gamma'=[\gamma'_{1},\cdots,\gamma'_{n}]$ is given by~\eqref{eq:distributionofnbasestations} and where the Laplace transform of $I$ (assuming Rayleigh fading on the interfering links) is given by~\eqref{eq:Laplaceinterferencecase1} while the intensity and intensity measure are given by~\eqref{eq:intensitywithblockage} and~\eqref{eq:intensitymeasurewithblockage} respectively,
and where
\begin{align}
 &\mathcal{L}_S^{\rm{UP}}(s)= \frac{1}{(1+s/m)^{nm}}\\
 &\mathcal{L}_{N}(s) =e^{-s\sigma^2/N_t}.
\end{align}
\end{thm}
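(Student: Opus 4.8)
The plan is to condition on the ordered normalized pathlosses $\gamma'=[\gamma'_1,\dots,\gamma'_n]$ of the $n$ cooperating base stations, whose joint density $f_{\Gamma'}$ (with the blockage intensity measure $\Lambda$ of Theorem~\ref{theorem:coverageprobabilitywithblockage}) is already in hand, and to bound the conditional coverage probability $\mathbb{P}(\text{SINR}>T\mid\gamma')$ from above. Writing the SINR in~\eqref{eq:SINRexpressionomnicase} as $S/(N+I)$ with signal $S=\big|\sum_{i=1}^n h_{v_i}/\sqrt{\gamma'_i}\big|^2$ (using $\gamma_{v_i}=1/\gamma'_i$), noise $N=\sigma^2/N_t$, and interference $I=\sum_{i}\gamma_{l_i}|h_{l_i}|^2\,|G_t(\Omega_{\phi^t_{l_i}}-\Omega_{\theta^t_{l_i}})|^2$, I would first replace $S$ by a stochastically larger, analytically tractable random variable independent of $N+I$.

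\textbf{Step 1 (signal upper bound).} By the triangle inequality and then Cauchy--Schwarz, $S=\big|\sum_{i=1}^n h_{v_i}/\sqrt{\gamma'_i}\big|^2\le\big(\sum_{i=1}^n |h_{v_i}|/\sqrt{\gamma'_i}\big)^2\le\big(\sum_{i=1}^n 1/\gamma'_i\big)\big(\sum_{i=1}^n|h_{v_i}|^2\big)$; note the phases of $h_{v_i}$ play no role here. Under Nakagami-$m$ fading on the cooperating links each $|h_{v_i}|^2$ is Gamma distributed with shape $m$ and unit mean, independent across $i$, so $G:=\sum_{i=1}^n|h_{v_i}|^2$ is Gamma with shape $nm$ and Laplace transform $\mathcal{L}_G(s)=\mathcal{L}_S^{\rm UP}(s)=(1+s/m)^{-nm}$. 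Hence, with $T'=T/\sum_{i\le n}\gamma_i'^{-1}$ and $G$ independent of $Y:=N+I$, we get $\mathbb{P}(\text{SINR}>T\mid\gamma')=\mathbb{P}\big(S>T(N+I)\mid\gamma'\big)\le\mathbb{P}(G>T'\,Y)$.

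\textbf{Step 2 (inversion) and assembling.} Since $G$ is a positive random variable independent of $Y$, I would apply the general-fading coverage identity of~\cite[Eq.~2.11]{baccellialoha} — equivalently, write $\mathbb{P}(G>T'Y)=\mathbb{E}_Y[\bar F_G(T'Y)]$, use $\int_0^\infty e^{-sx}\bar F_G(x)\,dx=(1-\mathcal{L}_G(s))/s$, invert along the imaginary axis, and interchange with $\mathbb{E}_Y$ — to obtain $\mathbb{P}(G>T'Y)=\int_{-\infty}^{\infty}\mathcal{L}_Y(2j\pi T's)\,\frac{\mathcal{L}_G(-2j\pi s)-1}{2j\pi s}\,ds$. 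Because noise and interference are independent, $\mathcal{L}_Y(s)=\mathcal{L}_N(s)\mathcal{L}_I(s)$, which recovers the inner integral in~\eqref{eq:coverageprobabilitywithblockageNakagamifading}. Crucially, the interfering field and the point process of the $n$ nearest base stations are exactly those of Theorem~\ref{theorem:coverageprobabilitywithblockage} — only the fading on the desired links changed — so $\mathcal{L}_I$ is still~\eqref{eq:Laplaceinterferencecase1} with the blockage intensity~\eqref{eq:intensitywithblockage}, $\mathcal{L}_N$ is~\eqref{eq:Laplaceofnoise}, and $f_{\Gamma'}$ is~\eqref{eq:distributionofnbasestations} with the blockage $\Lambda$ of~\eqref{eq:intensitymeasurewithblockage}. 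Substituting Step 2 into $\mathbb{P}(\text{SINR}>T)=\int f_{\Gamma'}(\gamma')\,\mathbb{P}(\text{SINR}>T\mid\gamma')\,d\gamma'$ and invoking the bound of Step 1 yields~\eqref{eq:coverageprobabilitywithblockageNakagamifading}.

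\textbf{Main obstacle.} The triangle/Cauchy--Schwarz step and the Gamma identity for $G$ are routine. The delicate part is the analytic inversion: justifying the interchange of $\mathbb{E}_Y$ with the (improper, conditionally convergent) Fourier integral and handling the point $s=0$, where $\frac{\mathcal{L}_G(-2j\pi s)-1}{2j\pi s}$ is actually regular with limit $\mathbb{E}[G]=n$ so that the integrand has no genuine singularity, the integral being real by the symmetry $s\mapsto -s$. I would either lean directly on~\cite[Eq.~2.11]{baccellialoha}, or argue integrability from $|\mathcal{L}_G(-2j\pi s)-1|\le C|s|$ near $0$ together with the decay of $\mathcal{L}_G$ (hence of the product) at infinity since $nm\ge 1$.
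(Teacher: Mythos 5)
Your proposal is correct and follows essentially the same route as the paper: the same Cauchy--Schwarz upper bound $S\le\big(\sum_{i\le n}\gamma_i'^{-1}\big)\sum_{i\le n}|h_{v_i}|^2$, the same identification of the bound as Gamma$(nm,1/m)$ with Laplace transform $(1+s/m)^{-nm}$, and the same invocation of the general-fading inversion formula of~\cite[Eq.~2.11]{baccellialoha} together with the blockage quantities $f_{\Gamma'}$, $\mathcal{L}_I$, $\mathcal{L}_N$ carried over from Theorem~\ref{theorem:coverageprobabilitywithblockage}. Your added remarks on justifying the Fourier inversion and the regularity at $s=0$ go slightly beyond what the paper spells out but do not change the argument.
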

\begin{proof}
Please refer to Appendix~\ref{appendix:Nakagamifadingcase} for a detailed proof. 
\end{proof}
\begin{rem}\label{rem:notanupperbound}
For the case of no base station cooperation $n=1$, the coverage probability in~\eqref{eq:coverageprobabilitywithblockageNakagamifading} above is exact and is not an upper bound.
\end{rem}
\begin{rem}
If one desires the exact coverage probability, one can obtain it with the theorem below, with ${\cal L}_S^{\rm UP}(s)$ replaced by the true Laplace transform of the signal $S=\big|\sum\limits_{i=1}^n\sqrt{\gamma_{v_i}} h_{v_i}|^2$. 
\end{rem}

\begin{cor}
\label{corr: SNRnakagmi}
An upper bound on the coverage probability in the absence of interference for the typical user, in a downlink mmWave heterogenous network with blockage with $K$ tiers, where each tier has a blockage parameter $\beta_k$, and with $n$ base stations having ULA with $N_t$ antennas jointly transmitting to it, with the assumption of Nakagami fading on the cooperating channel gains is 
\begin{align}
\normalfont \mathbb{P}({\text{SNR} > T}) \leq \int\limits_{0<\gamma'_1<\cdots<\gamma'_n<+\infty} f_{\Gamma'}(\gamma') \, \left(\frac{g^{(nm-1)}(z^\ast)}{(nm-1)!}\right) \, d\gamma'. \label{eq:coverageprobabilitySNR}
\end{align}
 where  $z^{\ast}=\frac{m}{2\pi j}$, $\gamma_{i}'=\frac{\| v_i\| ^\alpha}{P_{f(v_i)}}$ for  $i \in [1:n]$ and $T'= \frac{T}{\sum_{i\leq n} \gamma_i'^{-1}} $ while the joint distribution of $\gamma'=[\gamma'_{1},\cdots,\gamma'_{n}]$ is given by~\eqref{eq:distributionofnbasestations} with an intensity and intensity measure as in~\eqref{eq:intensitywithblockage} and~\eqref{eq:intensitymeasurewithblockage} respectively and where
\begin{align}
g(z)=(-1)^{nm}\frac{1-(1-\frac{2\pi jz}{m})^{nm}}{(\frac{2\pi j}{m})^{nm}(2\pi jz)}e^{-2\pi jz  \frac{T'\sigma^2}{N_t}}
\end{align}
and where $g^{nm-1}(z)$ is the $(nm-1)$ derivative of the function $g(z)$.
\end{cor}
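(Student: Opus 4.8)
The plan is to obtain Corollary~\ref{corr: SNRnakagmi} as the interference-free specialization of Theorem~\ref{theorem:coverageprobabilitywithblockageNakagamifading}, with the inner integral carried out in closed form by the residue theorem. When there is no interference the SINR in~\eqref{eq:SINRexpressionomnicase} reduces to $\text{SNR}=S/(\sigma^2/N_t)$ with $S=\big|\sum_{i=1}^n\sqrt{\gamma_{v_i}}\,h_{v_i}\big|^2$. I would first condition on the normalized pathlosses $\gamma'=(\gamma_1',\dots,\gamma_n')$ of the $n$ serving base stations, whose joint density $f_{\Gamma'}$ under blockage is already supplied by~\eqref{eq:distributionofnbasestations},~\eqref{eq:intensitywithblockage},~\eqref{eq:intensitymeasurewithblockage}. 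Factoring $\sum_{i\le n}\gamma_i'^{-1}=\sum_{i\le n}\gamma_{v_i}$ out of $S$, the event $\{\text{SNR}>T\}$ becomes $\{\,\tilde S>T'\sigma^2/N_t\,\}$ where $\tilde S=\big|\sum_i b_i h_{v_i}\big|^2$, $b_i=\sqrt{\gamma_{v_i}}/\sqrt{\sum_j\gamma_{v_j}}$ (so $\sum_i b_i^2=1$), and $T'=T/\sum_{i\le n}\gamma_i'^{-1}$.

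The \emph{upper bound} comes from the same stochastic domination underlying Theorem~\ref{theorem:coverageprobabilitywithblockageNakagamifading}: by the triangle inequality and Cauchy--Schwarz, $\tilde S\le\big(\sum_i b_i|h_{v_i}|\big)^2\le\big(\sum_i b_i^2\big)\big(\sum_i|h_{v_i}|^2\big)=\sum_i|h_{v_i}|^2$ pointwise. Under unit-mean Nakagami-$m$ fading the $|h_{v_i}|^2$ are i.i.d.\ Gamma$(m,1/m)$, so $\sum_i|h_{v_i}|^2$ is Gamma$(nm,1/m)$ with Laplace transform $\mathcal{L}_S^{\rm UP}(s)=(1+s/m)^{-nm}$; hence $\mathbb{P}(\tilde S>x)\le\overline F(x)$ for every $x$, where $\overline F$ is the CCDF of that Gamma law. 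Averaging over $\gamma'$ gives the asserted inequality and reduces everything to evaluating $\overline F(T'\sigma^2/N_t)$, i.e.\ the $\mathcal{L}_I\equiv1$ instance of the Theorem~\ref{theorem:coverageprobabilitywithblockageNakagamifading} formula. (For $n=1$ the first inequality is an equality, consistent with Remark~\ref{rem:notanupperbound}.)

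To close that integral I would rewrite the Theorem~\ref{theorem:coverageprobabilitywithblockageNakagamifading} inner integrand with $\mathcal{L}_I\equiv1$ --- namely, in the integration variable now called $z$, the quantity $e^{-2j\pi T'z\sigma^2/N_t}\,\frac{\mathcal{L}_S^{\rm UP}(-2j\pi z)-1}{2j\pi z}$. Since $1-2\pi jz/m=-2\pi j(z-z^\ast)/m$ with $z^\ast=m/(2\pi j)$, one gets $(1-2\pi jz/m)^{nm}=(-1)^{nm}(2\pi j/m)^{nm}(z-z^\ast)^{nm}$, and a one-line manipulation shows the integrand equals $g(z)/(z-z^\ast)^{nm}$ with $g$ the function in the statement; note $g$ is entire, since its numerator $1-(1-2\pi jz/m)^{nm}$ vanishes at $z=0$ so the apparent pole there is removable. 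The integrand thus has a single pole, of order $nm$, at $z^\ast=-jm/(2\pi)$ in the lower half-plane; because $e^{-2\pi jzT'\sigma^2/N_t}$ decays as $\mathrm{Im}\,z\to-\infty$ (using $T'\sigma^2/N_t>0$) while the remaining factors grow at most polynomially, I would close the contour in the lower half-plane and invoke the residue theorem, the residue at the order-$nm$ pole being $g^{(nm-1)}(z^\ast)/(nm-1)!$. Substituting back and integrating against $f_{\Gamma'}(\gamma')$ produces~\eqref{eq:coverageprobabilitySNR}; as a consistency check, for integer $nm$ this should coincide with the Erlang tail $e^{-ma}\sum_{k=0}^{nm-1}(ma)^k/k!$ at $a=T'\sigma^2/N_t$.

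The main obstacle is this last, complex-analytic step: pinning down the change of variable that places the pole at $z^\ast$, verifying that $g$ is entire so $z^\ast$ is the only contribution, selecting the half-plane in which to close from the sign of the noise exponent, and carrying the $(nm-1)$-fold derivative bookkeeping cleanly (which also tacitly requires $nm\in\mathbb{Z}_{>0}$). The stochastic-domination step is conceptually the source of the inequality but is routine once the normalization $\sum_i b_i^2=1$ is fixed; a minor additional point is justifying the Fubini interchange of the $\gamma'$-average with the $z$-integral, which follows from the $O(|z|^{-2})$ decay of $\big(\mathcal{L}_S^{\rm UP}(-2j\pi z)-1\big)/z$.
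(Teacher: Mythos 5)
Your proposal is correct and follows essentially the same route as the paper's own proof: specialize Theorem~\ref{theorem:coverageprobabilitywithblockageNakagamifading} to $\mathcal{L}_I\equiv 1$ (the Cauchy--Schwarz domination giving the Gamma$(nm,1/m)$ upper bound $S^{\rm UP}$ is exactly the paper's), then rewrite the inner integrand as $g(z)/(z-z^\ast)^{nm}$ with $z^\ast=\frac{m}{2\pi j}$ and evaluate it as the residue $g^{(nm-1)}(z^\ast)/(nm-1)!$ of the order-$nm$ pole. Your added justifications (entirety of $g$ at $z=0$, choice of half-plane from the sign of the noise exponent, Fubini) are refinements the paper leaves implicit rather than a different argument.
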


\begin{proof}
Please refer to Appendix~\ref{appendix:Nakagamifadingcase} for the proof.
\end{proof}

A similar remark to that made in Remark 4 can be made for the coverage probability (in absence of interference) in~\eqref{eq:coverageprobabilitySNR}.

\begin{table*}
\parbox{.45\linewidth}{
\centering
\centering
\caption{Tier Parameters for Fig.\ref{fig:nakagamiversusrayleigh} (Example 5)} 
\label{table:TableNumericalExample3}
\begin{tabular}{| l | l |}
\hline
Parameter  & Value  \\ 
\hline 
Intensity& $\lambda_1= (200^2\pi)^{-1}$\\
\hline
Power &$P_1=1$ W (30 dBm)\\
\hline 
Path Loss& $\alpha_1=2$ and $\alpha_2=4$ \\
\hline 
 Antennas & $N_t =16$\\
\hline 
Noise Figure & 5 dB\\
\hline
Bandwidth & 1 GHz\\
\hline
Blockage & 0.025\\
\hline
\end{tabular}
}
\hfill
\parbox{.45\linewidth}{
\centering
\caption{Tier parameters for Fig.~\ref{fig:nofadingexample1} (Example 6)} 
\label{table:TableNumericalExamplenofading1}
\begin{tabular}{| l | l |}
\hline
Parameter  & Value  \\ 
\hline 
Intensity& $\lambda= (200^2\pi)^{-1}$\\
\hline
Power &$P_1=1$ W \\
\hline 
Path Loss& $\alpha_1=2, \alpha_2=4$ \\
\hline 
 Antennas & $N_t =64$\\
\hline 
Noise Figure & 5 dB\\
\hline
Blockage & $\beta=0.025$\\
\hline
Bandwidth & 1 GHz\\
\hline
\end{tabular}}
\end{table*}

\subsection{Numerical Results}
{\bf Example 5:}
We consider a one tier network with two cooperating base stations $n=2$ and with tier parameters given in Table~\ref{table:TableNumericalExample3}. The coverage probability in~\eqref{eq:coverageprobabilitywithblockageNakagamifading} for the case of $m=3$ with and without base station cooperation are plotted. The purpose of this numerical example is to show that for tiers with high probability of blockage, in this case taken to be $\beta=0.025$ (corresponding to a high probability of blockage and average LOS range of 40 m), evaluation of the coverage probability of the network with and with the absence of interference yields almost exact numerical results. Therefore, we fix $n=2$ and we plot the coverage probability in~\eqref{eq:coverageprobabilitywithblockageNakagamifading} and~\eqref{eq:coverageprobabilitySNR} for Case 1) $m=3$ and $n=2$. While we use Th.~\ref{theorem:coverageprobabilitywithblockage} to plot the coverage probability for the Rayleigh fading Case 2) $m=1$ and $n=2$ (also with and without interference). The two curves shown in Fig.~\ref{fig:nakagamiversusrayleigh} for each of the cases corresponding to the Rayleigh and Nakagami fading almost exactly overlap. 
\begin{figure}
\centering
\includegraphics[width=10cm]{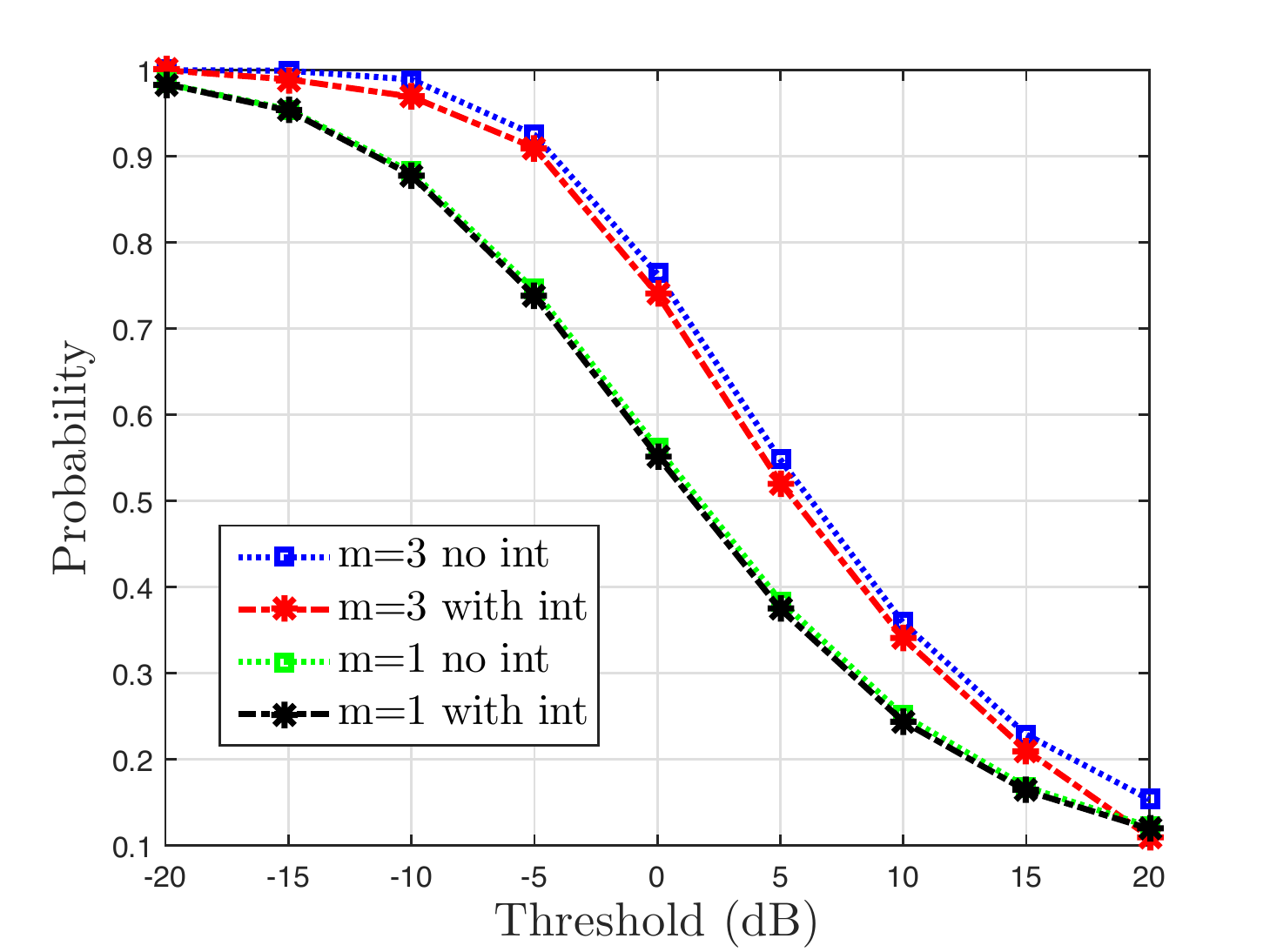}
\caption{Upperbounds on coverage probability in Th.~\ref{theorem:coverageprobabilitywithblockageNakagamifading} and Corollary~\ref{corr: SNRnakagmi} for a one tier network with parameters in Table~\ref{table:TableNumericalExample3} and for $n=2$ (two cooperating base stations) and for $n=1$ (no cooperation). The lower bounds are plotted using Th.~\ref{theorem:coverageprobabilitywithblockage}}. 
\label{fig:nakagamiversusrayleigh}
\end{figure}


\section{Coverage Probability with No Small Scale Fading}
\label{sec:coveragenofading}
{ In this section we consider the same network model as in Section~\ref{sec:network model blockage} but where the cooperating channel gains do not experience any fading. As shown in~\cite{Rappa_itwillwork
}, the assumption of having no small scale fading from the serving base stations is a good assumption in mmWave systems due to the highly directional transmission and when the receivers are not present in a rich scattering environment (in rich scattering environments Rayleigh fading may be more reasonable). The Rayleigh distribution is used to model the fading distribution of the interfering channel gains. Interestingly, we will show through numerical examples that the increase in coverage probability with two cooperating base stations is more pronounced when the cooperating channels experience no fading than that obtained when the fading is assumed to be the Rayleigh fading. }

\label{sec:CoverageProbabilitywithNoSmallScaleFading}
\begin{thm}
\label{thm:nofadingcaseSNRcase}
The coverage probability in the absence of small scale fading for the typical user, in a downlink mmWave heterogenous network with blockage with $K$ tiers, where each tier has a blockage parameter $\beta_k$, and with $n$ base stations having ULA with $N_t$ antennas jointly transmitting to it, with the assumption of no fading on the cooperating channel gains is 
\begin{align} \label{eq:coverageprobabilitywithblockagenosmallscalefading}
\normalfont \mathbb{P}({\text{SINR} > T})=\int\limits_{0<\gamma'_1<\cdots<\gamma'_n<+\infty}f_{\Gamma'}(\gamma')\int_{-\infty}^{\infty}\mathcal{L}_I( 2j\pi T  s)\mathcal{L}_{N}(2j \pi T s)\frac{\mathcal{L}_{S}( - 2 j\pi   s)-1}{2j\pi s} ds \ d\gamma'
\end{align}
where  $\gamma_{i}'=\frac{\| v_i\| ^\alpha}{P_{f(v_i)}}$ for  $i \in [1:n]$ while the joint distribution of $\gamma'=[\gamma'_{1},\cdots,\gamma'_{n}]$ is given by~\eqref{eq:distributionofnbasestations}
with an intensity and intensity measure as in~\eqref{eq:intensitywithblockage} and~\eqref{eq:intensitymeasurewithblockage} respectively and where
$\mathcal{L}_S(s)= e^{-s(\sum_{i\leq n}\gamma_i'^{-1/2})^2}$ and $\mathcal{L}_{N}(s) =e^{-s\sigma^2/N_t}$.
\end{thm}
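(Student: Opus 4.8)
The plan is to condition on which $n$ base stations cooperate, equivalently on the vector $\gamma'=(\gamma_1',\dots,\gamma_n')$ of their normalized pathlosses, to use the fact that under "no small-scale fading" the desired signal power is then \emph{deterministic}, to apply the general-fading coverage identity of~\cite[Eq.~2.11]{baccellialoha}, and finally to de-condition. Two ingredients are inherited verbatim from the proof of Theorem~\ref{theorem:coverageprobabilitywithblockage} (Appendix~\ref{appendix:coveragewithblockage}): first, the joint density of $\gamma'$ on the ordered cone $0<\gamma_1'<\dots<\gamma_n'$ is $f_{\Gamma'}(\gamma')$ as in~\eqref{eq:distributionofnbasestations} with intensity~\eqref{eq:intensitywithblockage} and intensity measure~\eqref{eq:intensitymeasurewithblockage} --- this is a property of the blockage-thinned point process alone and is insensitive to the fading law on the serving links; second, conditioned on $\gamma_n'$, the interference $I=\sum_{i\in\mathcal{T}^c}\gamma_{l_i}|h_{l_i}|^2|G_t(\Omega_{\phi_{l_i}^t}-\Omega_{\theta_{l_i}^t})|^2$ is generated by the Poisson process of non-cooperating base stations, which in the normalized-pathloss domain lives on $(\gamma_n',\infty)$; since the interfering links are still Rayleigh and all angles still i.i.d.\ uniform on $[-\pi,\pi]$, its Laplace transform is exactly~\eqref{eq:Laplaceinterferencecase1}, and the deterministic noise contributes $\mathcal{L}_N(s)=e^{-s\sigma^2/N_t}$ as in~\eqref{eq:Laplaceofnoise}.

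The only new step is the numerator of~\eqref{eq:SINRexpressionomnicase}. Taking "no fading on the cooperating links" to mean $h_{v_i}\equiv 1$ (unit modulus, coherently combined at the single-antenna receiver), the desired power is $S=\big|\sum_{i\leq n}\sqrt{\gamma_{v_i}}\,h_{v_i}\big|^2=\big(\sum_{i\leq n}\gamma_i'^{-1/2}\big)^2$, a constant once $\gamma'$ is fixed, with degenerate Laplace transform $\mathcal{L}_S(s)=e^{-s(\sum_{i\leq n}\gamma_i'^{-1/2})^2}$. With $Y:=\sigma^2/N_t+I$ and independence of $I$ from the constant noise, $\mathcal{L}_Y(s)=\mathcal{L}_I(s)\mathcal{L}_N(s)$. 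I would then apply the inversion identity underlying~\cite[Eq.~2.11]{baccellialoha}: for independent nonnegative $S,Y$,
\[
\mathbb{P}(S>TY)=\int_{-\infty}^{\infty}\mathcal{L}_Y(2j\pi T s)\,\frac{\mathcal{L}_S(-2j\pi s)-1}{2j\pi s}\,ds ,
\]
which one derives from $\frac{\mathcal{L}_S(-2j\pi s)-1}{2j\pi s}=\int_{0}^{\infty}\overline{F}_S(t)\,e^{2j\pi s t}\,dt$, the representation $\int_{-\infty}^{\infty}e^{2j\pi s(t-TY)}\,ds=\delta(t-TY)$, and Fubini. Substituting $\mathcal{L}_S$ above, $\mathcal{L}_I$ from~\eqref{eq:Laplaceinterferencecase1}, $\mathcal{L}_N$, and then integrating over $0<\gamma_1'<\dots<\gamma_n'$ against $f_{\Gamma'}(\gamma')$ produces exactly~\eqref{eq:coverageprobabilitywithblockagenosmallscalefading}. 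In contrast to Theorem~\ref{theorem:coverageprobabilitywithblockageNakagamifading}, there is no need to normalize the threshold or to bound $\mathcal{L}_S$: since $S$ is exactly degenerate its true transform is at hand, so the expression is exact, not an upper bound.

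The main obstacle I anticipate is making the Fourier-inversion step rigorous. One has to check that the $s$-integrand is integrable on $\mathbb{R}$: it is regular at the origin because $\mathcal{L}_S(-2j\pi s)-1\sim -2j\pi s\,(\sum_{i\leq n}\gamma_i'^{-1/2})^2$, so the ratio tends to $(\sum_{i\leq n}\gamma_i'^{-1/2})^2$ there, and one needs enough decay as $|s|\to\infty$ --- obtained from $\mathcal{L}_I$ together with oscillatory cancellation in $\mathcal{L}_S(-2j\pi s)$ --- to legitimize interchanging the $s$-integral with the expectation over the interferer point process and with the outer $\gamma'$-integral; should absolute integrability fail, one reads the inner integral as a symmetric principal-value limit, exactly as in the Gil-Pelaez inversion argument. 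The rest is bookkeeping: assembling $\mathcal{L}_S$, $\mathcal{L}_I$, $\mathcal{L}_N$ and $f_{\Gamma'}$ into~\eqref{eq:coverageprobabilitywithblockagenosmallscalefading}.
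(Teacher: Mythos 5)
Your proposal is correct and takes essentially the same route as the paper, whose proof is the single remark that the result ``follows easily from finding the Laplace transform of the desired signal'': i.e., compute the degenerate transform $\mathcal{L}_S(s)=e^{-s(\sum_{i\leq n}\gamma_i'^{-1/2})^2}$ of the now-deterministic signal and insert it, together with the blockage-case $f_{\Gamma'}$, $\mathcal{L}_I$, and $\mathcal{L}_N$, into the inversion identity of~\cite[Eq.~2.11]{baccellialoha} already used for Theorem~\ref{theorem:coverageprobabilitywithblockageNakagamifading}, with the threshold left as $T$ since no normalization or bounding of $S$ is needed. Your extra attention to the Fourier-inversion integrability simply supplies rigor the paper leaves implicit.
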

\begin{proof}
The proof follows easily from finding the Laplace transform of the desired signal.
\end{proof}
\subsection{Numerical Results}

\begin{figure*}
\centering
\subfigure[Coverage probability in Th.~\ref{thm:nofadingcaseSNRcase} for a one tier network with parameters in Table~\ref{table:TableNumericalExamplenofading1} with two cooperating base stations ($n=2$) and with no cooperation ($n=1$). The coverage probabilities for the same network parameters but with Rayleigh assumption in Th.~\ref{theorem:coverageprobabilitywithblockage} are also plotted for ($n=1$) and ($n=2$). ]{%
\includegraphics[width=0.45\textwidth]{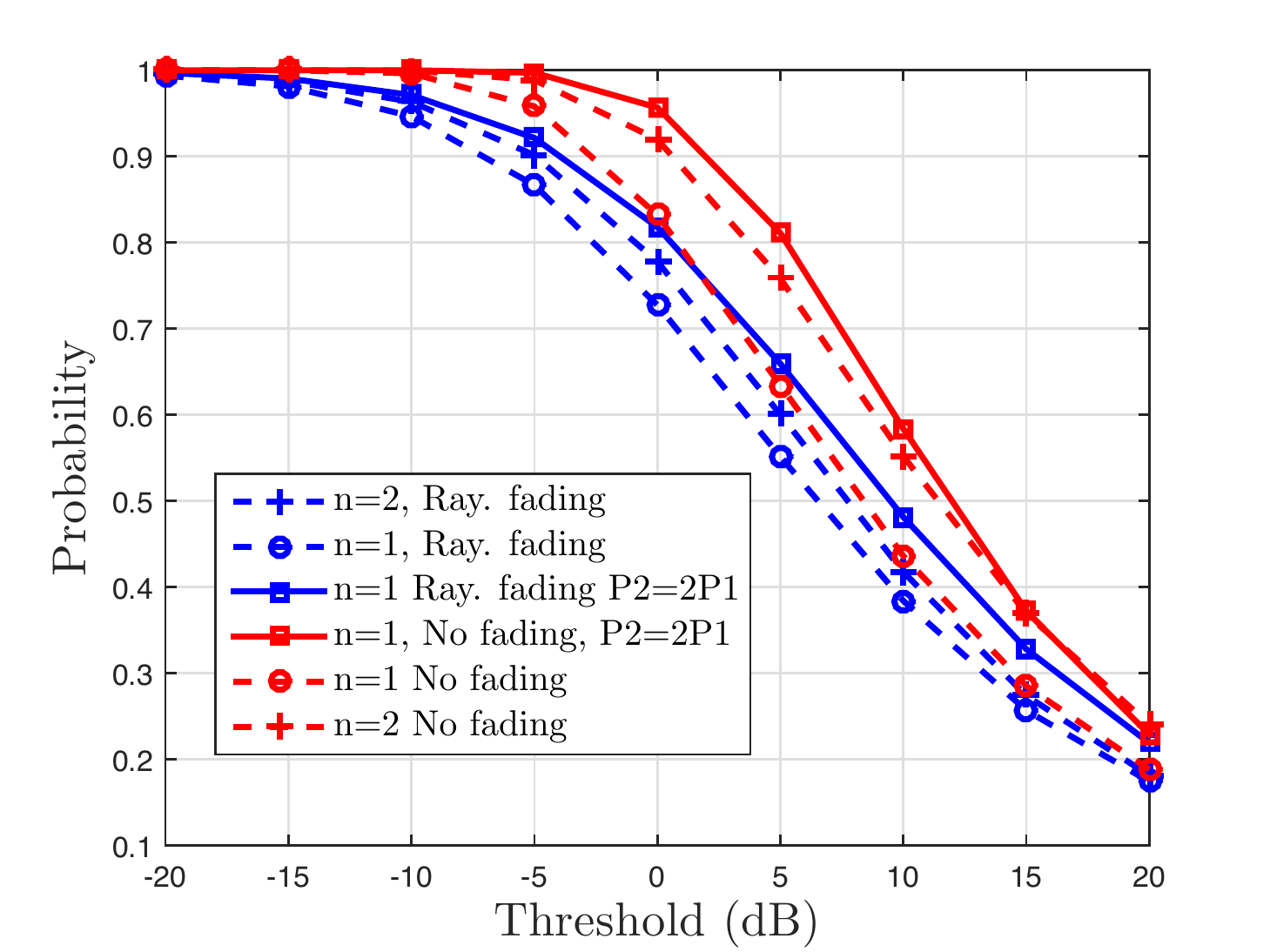}%
\label{fig:nofadingexample1}%
}
\hfil
\subfigure
[Coverage probability in Th.~\ref{thm:nofadingcaseSNRcase} for a one tier network with parameters in Table~\ref{table:TableNumericalExamplewithblockage2} for $\beta=0.003$ with two cooperating base stations ($n=2$) and with no cooperation ($n=1$). The coverage probabilities for the same network parameters but with Rayleigh assumption in Th.~\ref{theorem:coverageprobabilitywithblockage} are also plotted for ($n=1$) and ($n=2$).]{%
\includegraphics[width=0.45\textwidth]{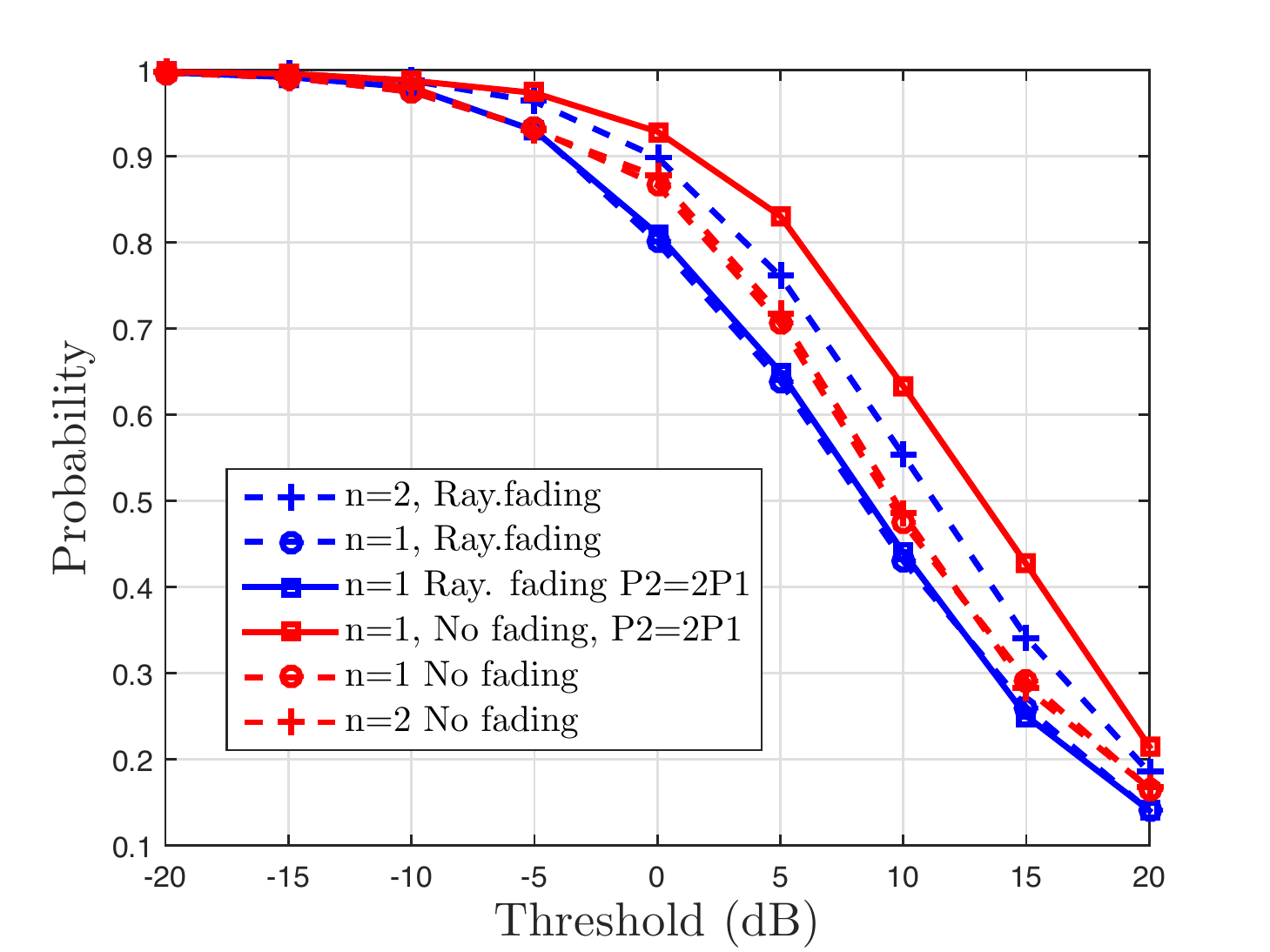}%
\label{fig:nofadingexample2}%
}
\end{figure*}
{\bf Example 6:}
{ In this example we numerically evaluate Th.~\ref{thm:nofadingcaseSNRcase}. We consider a one tier network with tier parameters given in Table~\ref{table:TableNumericalExamplenofading1}. The coverage probability in Th.~\ref{thm:nofadingcaseSNRcase} with two cooperating base stations $n=2$ is plotted against the two different cases described as in Example 2 of Section~\ref{sec:performance analysis with blockage}. We also plot the three curves corresponding to the Rayleigh fading assumption on the cooperating channel gains in Th~\ref{theorem:coverageprobabilitywithblockage}. The curves in Fig.~\ref{fig:nofadingexample1} suggest that the increase in coverage probability with CoMP is mainly due to a power increase since the curve corresponding to the case when there is no cooperation but with double transmit power is very close to that of the curve of having two cooperating base stations but with half the transmit power. This also implies that this network is noise limited. This observation is valid for the networks where gains from cooperating base stations are non fading or when they are Rayleigh fading. It is interesting to note that the increase in probability with two cooperating base stations (almost 18 \% for threshold $T=5$ dB) for the case when there is no small scale fading exceeds the increase with CoMP (5 \% for the same threshold) for the case when the channel gains are Rayleigh distributed and hence implying that CoMP provides larger gains for mmWave networks with no small scale fading from the serving base stations.

{\bf Example 7:} 
In an attempt to understand whether the observations made in Example 6 hold for a network which is not noise limited, we consider the network in Example 2 ($\beta$=0.003) and compare the Rayleigh fading case to the case when there is no small scale fading. In Fig.~\ref{fig:nofadingexample2} we plot the curves corresponding to the different cases as explained in Example 2 along with the curves corresponding to the coverage probabilities non fading networks in Th.~\ref{thm:nofadingcaseSNRcase}. The increase in coverage probability with CoMP ($n=2$) is an increase of 16\% at threshold $T=10$, while the increase with CoMP for the Rayleigh fading case is almost 12 \% at the same threshold. Therefore, the observation made in Example 6 does in fact hold for this network too.}

\section{Conclusions}
In this paper we have considered the problem of base station cooperation in mmWave heterogenous networks. Using stochastic geometry, coverage probabilities were derived at the typical user, accounting for directionality at the base stations, blockage, interference and different fading distributions (Rayleigh and Nakagami). Numerical results suggest that coverage with cooperation rival that with no cooperation especially in dense mmWave networks with no small scale fading on the cooperating channel gains. Future work includes deriving the coverage probability at a multi-antenna typical receiver and accounting for possible errors due to beam steering.


\appendices
\section{Proof of Theorem~\ref{theorem:coverageprobabilityULAanalysis}}
\label{appendix:coverageprobabilityULAanalysis}
The analysis of the coverage probability for the mmWave heterogeneous network is similar to that in~\cite[Appendix A]{gauravmicrowavecomp} with two major differences. The first difference is the presence of multiple antennas at the transmitter. The second difference is that the interference is a function of i.i.d uniformly distributed random variables, assuming that the path angles are independent and uniformly distributed over $[-\pi,+\pi]$. 

 Let $\Theta_k = \{ \frac{{\| v\| ^\alpha}}{{P_k}}, v \in \Phi_{k}\}$ for $k\in[1:K]$. Its density can be derived using the Mapping theorem~\cite[Thm. 2.34]{book:martinhaenggi} and is given by
 \begin{align}\label{eq:intensity of the mapped process}
 \lambda_k(v)= \lambda_k\frac{2\pi}{\alpha}P_{k}^{\frac{2}{\alpha}}v^{\frac{2}{\alpha}-1}, \,\, k\in[1:K].
 \end{align} The process $\Theta=\cup_{k=1}^K \Theta_k$ has a density 
\begin{align}\lambda(v)=\sum_{k=1}^K\lambda_k(v)\label{eq:intensityofunionofthetiers}.\end{align}


\subsection{ Distribution of Strongest Base stations}
\vspace{0.4cm}
We assume that the elements in the process $\Theta$ are indexed in increasing order. Let $$\gamma'_{i} = \frac{\| v_i\| ^\alpha}{P_{f(v_i)}}$$ then ${\bf \gamma'}=\{\gamma'_{1},\cdots,\gamma'_{n}\} $ denotes the set of {\it normalized pathloss} of the cooperating base stations. We first present the distribution of the two nearest base stations by following similar steps as done in~\cite{Motchanov_Distance}, then derive the distribution of $n$ closest base stations. The distribution of the closest two base stations (assuming two cooperating base stations) is given by 
\begin{align} \label{eq:joint distribution}
f_{\Gamma'} (\gamma'_1,\gamma'_2)= f_{\Gamma_2'|\Gamma_1'}(\gamma'_2|\gamma'_1) f_{\Gamma_1'}(\gamma'_1)
\end{align}
where the distribution of the first base station with strong received power is obtained from the null probability of a PPP is 
\begin{align} \label{eq:firstbasestationsestationprob}
f_{\Gamma_1'}( \gamma_1') =&\lambda(\gamma_1')e^{-\Lambda(\gamma_1')}
\end{align}
while the conditional distribution is given by
\begin{align}\label{eq:conditionalprob}
f_{\Gamma_2'|\Gamma_1'}(\gamma_2'| \gamma_1')=&\lambda(\gamma_2')e^{-\Lambda(\gamma_2')+\Lambda(\gamma_1')}
\end{align}
The joint distribution for the case of $n=2$ base stations is obtained by substituting~\eqref{eq:firstbasestationsestationprob} and~\eqref{eq:conditionalprob} in~\eqref{eq:joint distribution}.
The result can be generalized to any number $n$ of cooperating base stations
\begin{align} 
f_{\Gamma'} (\gamma')=&\prod_{i=1}^n \lambda(\gamma_i')e^{-\Lambda(\gamma_n')}.
\end{align}

\subsection{Derivation of Coverage Probability}

We have assumed that the cooperating base stations have {\it normalized pathloss} $\gamma'_i$ by $i \leq n$, then the desired signal power at the numerator of~\eqref{eq:SINRexpressionomnicase} can be re-written as
\begin{align}\nonumber
S=&\big|\sum\limits_{i=1}^n\sqrt{\gamma_{v_i}} h_{v_i}|^2 =\big|\sum\limits_{i\leq n} \gamma_{i}'^{-1/2} h_{i} \big|^2.
\end{align}

{We have that the interfering base stations are indexed with $i>n$, then the power of the interference $I$ can be expressed (by replacing the index $l_i$ with just $i$) as
\begin{align}\nonumber
I &=\sum\limits_{i=1}^{|\mathcal{T}|^c}\ \gamma_{l_i} |h_{l_i}|^2 |G_t(\Omega_{\phi_{l_i}^t}-\Omega_{\theta_{l_i}^t})|^2\\
&=\sum\limits_{i>n} \gamma_{i}'^{-1} |h_{i}|^2\bigl |G_t(\Upsilon_i)\bigl |^2\label{eq:interferenceexpression}
\end{align}

The coverage probability for a threshold $T$, can be re-written as
\begin{align*}\nonumber
\mathbb{P}(\text{SINR} > T)&= \mathbb{P}\left (S> T(I+\frac{\sigma^2}{N_t})\right)= \mathbb{E}_{ \gamma',I} \Bigg[\mathbb{P}\left (\big|\sum\limits_{i\leq n} \gamma_{i}'^{-1/2} h_{i} \big|^2> T(I+\frac{\sigma^2}{N_t})\Biggl|\gamma',I\right)\Bigg] \\\nonumber
&\stackrel{(a)}{=} \mathbb{E}_{ \gamma',I} \Bigg[\exp\left(\frac{-T(I+\frac{\sigma^2}{N_t})}{\sum_{i\leq n} \gamma_{i}'^{-1}}\right)\Bigg]\nonumber\stackrel{(b)}{=}\mathbb{E}_{\gamma'} \Bigg[\mathcal{L}_I\left(\frac{T}{ \sum_{i\leq n}\gamma_{i}'^{-1}}\right)\mathcal{L}_N\left(\frac{T}{ \sum_{i\leq n}\gamma_{i}'^{-1}}\right)\Bigg]\\
&\stackrel{(c)}{=}\int\limits_{0<\gamma'_1<\cdots<\gamma'_n<+\infty} \mathcal{L}_I\bigg(\frac{T}{ \sum_{i\leq n}\gamma_{i}'^{-1}}\bigg)\mathcal{L}_N\left(\frac{T}{ \sum_{i\leq n}\gamma_{i}'^{-1}}\right)d\gamma'
\end{align*}
where (a) follows from the cumulative density function of the exponentially distributed random variable $S$ (due to Rayleigh fading assumption) with mean $\sum\limits_{i\leq n} \gamma_{i}'^{-1}$;
(b) follows from the definition of the Laplace transform of $I$, $\mathcal{L}_I(s)= \mathbb{E} [ e^{-sI}]$ and the Laplace transform of the noise, $\mathcal{L}_N(s)=\mathbb{E} [ e^{-s\sigma^2/N_t}]$;
(c) by definition of the expectation with respect to the distribution of $\gamma'$.

\vspace{0.4cm}

Next we evaluate the Laplace transform of the interference $I$, but before going into the details of the derivation we need to find the distribution of 
$\Upsilon_i:=\Omega_{\phi_{l_i}^t}-\Omega_{\theta_{l_i}^t}$, since the interference in~\eqref{eq:interferenceexpression} is a function of the beam forming gain function which in turn is a function of $\Upsilon_i$. The beam forming gain is given by~\eqref{eq:antennagainfunction}.

With the assumption that the interfering path angles of departure and the beam steering angle used by the interfering base stations are i.i.d $\sim U([-\pi,+\pi])$, then the directional cosine $ \Omega_{\phi_{l_i}^t}$ and $\Omega_{\theta_{l_i}^t}$ are random variables with the following common probability density function

\[ f_\Omega(\omega) = \left\{ \begin{array}{ll}
         \frac{1}{\pi\sqrt{1-\omega^2}} & \mbox{if $-1\leq \omega \leq 1$};\\
          0 & \mbox{otherwise}.\end{array} \right. \]

then the distribution of $\Upsilon_i=\Omega_{\phi_{l_i}^t}-\Omega_{\theta_{l_i}^t}$ is the result of the convolution of the probability density functions of $\Omega_{\phi_{l_i}^t}$ and $\Omega_{\theta_{l_i}^t}$ and is given by
\begin{align}\label{eq:pdfofthedifferencebetweendirectionalcosines}
f_{\Upsilon_i}(\varepsilon_i)= \int_{\max\{-1,-1-\varepsilon_i\}}^{\min\{1,1-\varepsilon_i\}} \left( \frac{1}{\pi^2\sqrt{1-(\varepsilon_i+\omega)^2}} \frac{1}{\sqrt{1-\omega^2}}\right)\, dy
\end{align}

Then the Laplace transform of the interference can be derived
\begin{align} \nonumber
\mathcal{L}_I(s)& =\mathbb{E}\Bigg[e^{-s\sum\limits_{i>n} \gamma_{i}'^{-1} |h_{i}|^2 |G_t(\Upsilon_i)|^2}\Bigg]=\mathbb{E}\Bigg[\prod_{i>n}\bigg(e^{-s  \gamma_{i}'^{-1}|h_{i}|^2|G_t(\Upsilon_i)|^2 }\bigg)\Bigg]\\\nonumber
&\stackrel{(a)}{=}\mathbb{E}_{\{\Upsilon_i\},\Theta}\Bigg[\prod_{i>n}\mathbb{E}_{|h|^2}\bigg(e^{-s \gamma_{i}'^{-1}|h|^2|G_t(\Upsilon_i)|^2 }\bigg)\Bigg]\\\nonumber
&\stackrel{(b)}{=}\mathbb{E}_{\{\Upsilon_i\},\Theta }\Bigg[\prod_{{i>n}}\Bigg(\frac{1}{1+s |G_t(\Upsilon_i)|^2 \gamma_{i}'^{-1} }\Biggl)\Bigg]\\\nonumber
&\stackrel{(c)}{=}\mathbb{E}_{\Theta }\Bigg[\prod_{{i>n}}\mathbb{E}_{\Upsilon}\Bigg(\frac{1}{1+s |G_t(\Upsilon)|^2 \gamma_{i}'^{-1} }\biggl)\Bigg]\\\nonumber
&\stackrel{(d)}{=}\mathbb{E}_{\Theta }\Bigg[\prod_{{i>n}}\left(\int_{-2}^{+2}\Bigg(\frac{1}{1+s|G_t(\varepsilon)|^2 \gamma_{i}'^{-1} }\Biggl) f_{\Upsilon}(\varepsilon)\ d\varepsilon \ \right)\Bigg] \\\label{eq:Laplacetransformofinterference}
&\stackrel{(e)}{=}\text{exp} \Bigg( -\int_{\gamma_n'}^{\infty} \Bigg[ 1- \int_{-2}^{+2}\Bigg(\frac{1}{1+s|G_t(\varepsilon)|^2 v^{-1} }\biggl) f_{\Upsilon}(\varepsilon)\ d\varepsilon \Bigg] \ \lambda(v) \, dv \Bigg)
\end{align}
where (a) follows from the i.i.d distribution of $|h_i|^2$ and their independence from $\Theta$ and $\Upsilon_i$;
where (b) follows from the Rayleigh fading assumption and the moment generating function of an exponential random variable;
where (c) follows from the i.i.d distribution of $\Upsilon_i$ and their independence from $\Theta$;
(d) from the taking the expectation with respect to the random variable $\Upsilon_i$ whose distribution is given by~\eqref{eq:pdfofthedifferencebetweendirectionalcosines};
(e) follows from the probability generating function of poisson point process~\cite[Thm. 4.9]{book:martinhaenggi} (as used in~\cite[Eq. (38)]{Nigam_CoMP_journal}) and where $\lambda(v)$ is given by~\eqref{eq:intensity of the mapped process}.

Next we give an approximation of the Laplace transform of the interference for easier numerical evaluations (by approximating the beam forming gain function by a piecewise linear function) and to compare our results with~\cite{Nigam_CoMP_journal}, the Laplace transform of the interference is}
\begin{align*} 
\mathcal{L}_I(s)& 
\stackrel{(d)}{=}\mathbb{E}_{\Theta }\Bigg[\prod_{{i>n}}\left(\int_{-2}^{+2}\Bigg(\frac{1}{1+s |G_t(\varepsilon)|^2 \gamma_{i}'^{-1} }\Biggl) f_{\Upsilon}(\varepsilon)\ d\varepsilon \ \right)\Bigg]\\
&\stackrel{(e)}{\approx}\mathbb{E}_{\Theta }\Bigg[\prod_{{i>n}}\left(\int_{-2}^{-1/L_t} f_{\Upsilon}(\varepsilon)\ d\varepsilon +\int_{-1/L_t}^{1/L_t} \frac{1}{1+s \gamma_{i}'^{-1} }f_{\Upsilon}(\varepsilon)\ d\varepsilon +\int_{1/L_t}^{2} f_{\Upsilon}(\varepsilon)\ d\varepsilon \right)\Bigg]\\ 
&\stackrel{(f)}{=}\mathbb{E}_{\Theta }\Bigg[\prod_{{i>n}}\left(1 - \frac{c \ s\gamma_i^{-1}}{1+s\gamma_i^{-1}}\right)\Bigg]\stackrel{(g)}{=}\text{exp}\Bigg(-\int_{\gamma_n'}^{\infty}\Bigg[\frac{c \ sv^{-1}}{1+sv^{-1}}\Bigg]\lambda(v) \, dv\Bigg)
\end{align*}
where in (e) an approximation of the gain function was used which is given
$$G_t(\varepsilon) = \left\{ \begin{array}{ll}
         1 & \mbox{if $ - \frac{1}{L_t}\leq \varepsilon \leq \frac{1}{L_t} $}, \,\,\,\,\,\, L_t= N_t\Delta_t\\
         0 & \mbox{if $\mbox{otherwise}$}.\end{array} \right. $$       
(f) defining $c:=\int_{-1/L_t}^{+1/L_t} f_{\Upsilon}(\varepsilon)\ d\varepsilon$.

\begin{rem}
If $c=1$ then the Laplace transform in step (g) simplifies to that in~\cite[Eq. (38)]{Nigam_CoMP_journal}.
\end{rem}

\section{Proof of Theorem~\ref{theorem:coverageprobabilitywithblockage}}
\label{appendix:coveragewithblockage}

\subsection{Intensity and Intensity Measure}
Let $\Theta_k = \{ \frac{{\| v\| ^\alpha}}{{P_k}}, v \in \Phi_{k}\}$ for $k\in[1:K]$ with intensity $\lambda_k(v)$ given in~\eqref{eq:intensity of the mapped process}. The pathloss $\alpha$ is a random variable that takes on values $\alpha_1$ and $\alpha_2$ with probability $e^{-\beta_kv} $ and $1- e^{-\beta_kv}$ respectively (note that we have dropped the $\| .\| $ of $v$ for easier notation). Then the process $\Theta =\cup_{k=1}^K \Theta_k$ is a non-homogenous PPP with density $\lambda(v)= \sum_{k=1}^K \lambda_k(v)$.  In the following we compute the intensity and intensity measure of $\Theta_k$ for $k\in[1:K]$. By using the Mapping Theorem~\cite[Thm. 2.34]{book:martinhaenggi} the intensity measure and the intensity of each tier $k$, $k\in[1:K]$, are given by 
 \begin{align} \nonumber
\Lambda_k([0,r]) &= \int_{0}^{(rP_k)^{\frac{1}{\alpha_1}}} 2\pi \lambda_k v e^{-\beta_k v}dv + \int_{0}^{(rP_k)^{\frac{1}{\alpha_2}}} 2\pi \lambda_k v (1- e^{-\beta_k v})dv\\&=
\nonumber\frac{2\pi \lambda_k}{\beta_k^2}\left(1- e^{-\beta_k (rP_k)^{\frac{1}{\alpha_1}}}(1+\beta_k (rP_k)^{\frac{1}{\alpha_1}})\right)+ \pi \lambda_k (rP_k)^{\frac{1}{\alpha_2}}\\&-\frac{2\pi \lambda_k}{\beta_k^2}\left(1- e^{-\beta_k (rP_k)^{\frac{1}{\alpha_2}}}(1+\beta_k (rP_k)^{\frac{1}{\alpha_2}})\right)\\
&\lambda_k(v) =\frac{d\Lambda_k([0,v]) }{dv} =A_kv^{\frac{2}{\alpha_1}-1} e^{-a_kv^{\frac{1}{\alpha_1}}} +B_kv^{\delta_2-1}(1-e^{ -b_k v^{\frac{1}{\alpha_2}}})\label{eq:intensityforblockagecase}
\end{align}
with $A_k= \pi \lambda_k \frac{2}{\alpha_1} P_k^{\frac{2}{\alpha_1}}, a_k=\beta_kP_k^{\frac{1}{\alpha_1}}, b_k=\beta_kP_k^{\frac{1}{\alpha_2}} \text{and }B_k=\pi \lambda_k \frac{2}{\alpha_2}P_k^{\frac{2}{\alpha_2}}$.

\vspace{0.4cm}

The process $\Theta = \cup_{k=1}^K \Theta_k$ has the following intensity measure and intensity 
\begin{align} 
&\Lambda(v) = \sum_{k=1}^K\Lambda_k(v), \\
&\lambda(v) = \sum_{k=1}^K\lambda_k(v)= \sum_{k=1}^KA_kv^{\frac{2}{\alpha_1}-1} e^{-a_kv^{\frac{1}{\alpha_1}}} +B_kv^{\frac{2}{\alpha_2}-1}(1-e^{ -b_k v^{\frac{1}{\alpha_2}}}).\label{eq:intensitywithbockage}
\end{align}

\vspace{0.5cm}
\section{Proof of Theorem~\ref{theorem:coverageprobabilitywithblockageNakagamifading} and Corollary~\ref{corr: SNRnakagmi} }
\label{appendix:Nakagamifadingcase}
\subsection{Proof of Theorem~\ref{theorem:coverageprobabilitywithblockageNakagamifading}}
Let us re-consider a different distribution on the direct links - while keeping the same Rayleigh fading assumption on the interfering links -  in particular let us consider that the fading is Nakagami with shape parameter $m$ and scale parameter $\theta=1$.  In this case we will derive the distribution of an upper bound on the desired signal in particular the distribution of
\begin{align}\nonumber
S=& \big|\sum\limits_{i\leq n} \gamma_{i}'^{-1/2} h_{i} \big|^2 \leq \sum\limits_{i\leq n} \gamma_{i}'^{-1}\sum\limits_{i\leq n} |h_{i}|^2 =S^{\rm {UP}}
\end{align}
Then we have
$$\mathbb{P}\left(S^{\rm{UP}} \geq T( I+\frac{\sigma^2}{N_t})\right)=\mathbb{P}\left(\sum_{i\leq n}|h_{i}|^2\geq\frac{T(I+\frac{\sigma^2}{N_t})}{\sum_{i} \gamma_{i}^{-1}}\right)]=\mathbb{P}\left( S_{\rm {UP}}\geq T'(I+\frac{\sigma^2}{N_t})\right)\bigl|_{T'=\frac{T}{\sum_{i} \gamma_{i}^{-1}}, S_{\rm{UP}}=\sum_{i}|h_{i}|^2}$$ 
The distribution of the upper bound and its Laplace transform are respectively given by
$$ \sum_{i\leq n} |h_{i}|^2 \sim \text{Gamma}(nm,1/m) \leftrightarrow \mathcal{L}_{S^{\rm{UP}}}(s) =\frac{1}{(1+s/m)^{nm}}.$$
We then have from~\cite[{Eq. 2.11}]{baccellialoha} that the coverage probability is
\begin{align} \nonumber
\mathbb{P}\left(\text{SINR} > T\right)&= \mathbb{P}\left (S> T'(I+\frac{\sigma^2}{N_t})\right) \\\nonumber
&=\int\limits_{0<\gamma'_1<\cdots<\gamma'_n<+\infty}f_{\Gamma'}(\gamma')\int_{-\infty}^{\infty}\mathcal{L}_I( 2j\pi T'  s)\mathcal{L}_{N}(2j \pi T' s)\frac{\mathcal{L}_{S^{\rm UP}}( - 2 j\pi   s)-1}{2j\pi s} ds \ d\gamma' \label{eq:coveragenakagami}\nonumber
\end{align}
where the joint distribution of $\gamma'$ is given by
\begin{align}
f_{\Gamma'} (\gamma')=&\prod_{i=1}^n \lambda(\gamma_i')e^{-\Lambda(\gamma_n')}
\end{align}
Next we have from~\eqref{eq:Laplacetransformofinterference} the Laplace transform of the interference $\mathcal{L}_I(s)$ with an intensity $\lambda(v)$ given by~\eqref{eq:intensitywithbockage} while the Laplace transform of the noise is given by
\begin{align}
 &\mathcal{L}_{N}(s) = \mathbb{E}[ e^{-s\frac{\sigma^2}{N_t}}]=e^{-s\frac{\sigma^2}{N_t}}.
\end{align}


\subsection{Proof of Corollary~\ref{corr: SNRnakagmi}}
The coverage probability in the absence of interference is given by~\eqref{eq:coverageprobabilitywithblockageNakagamifading} with $\mathcal{L}_I(s)=1$ is
\begin{subequations}
\begin{align} 
\normalfont \mathbb{P}({\text{SNR} > T})&= \int\limits_{0<\gamma'_1<\cdots<\gamma'_n<+\infty}f_{\Gamma'}(\gamma')\int_{-\infty}^{\infty} \frac{\frac{1}{(1-\frac{2j\pi s}{m})^{nm}}-1}{2j \pi s} e^{-2j \pi s \frac{T'  \sigma^2}{N_t}} ds \ d\gamma' 
\\&=\int\limits_{0<\gamma'_1<\cdots<\gamma'_n<+\infty}f_{\Gamma'}(\gamma')\underbrace{\int_{-\infty}^{\infty} f(s) \, ds}_{Q} \, d\gamma'. 
\end{align}
\end{subequations}
In the following we seek to solve $Q$. Note that a pole of order $nm$ exists in the integrand thus the integral $Q$ can be solved using contour integration and its Residue (Res) is
\begin{align}\label{eq:Residueofintegral}
Q=\operatorname{Res}_{z^\ast = \frac{m}{2\pi j}} [f(z)] = \lim_{z\to z^\ast} \frac{1}{(nm-1)!}\biggl(\frac{d}{dz}\biggr)^{nm-1} (z-z^\ast)^{nm}f(z).
\end{align}
The function $f(z)$ can be re-written in the following form 
\begin{subequations}
\begin{align}
&f(z)= \frac{g(z)}{(z-z^{\ast})^{nm}}= \frac{(-1)^{nm}\frac{1-(1-\frac{2\pi jz}{m})^{nm}}{(2\pi j)^{nm}(2\pi jz)}e^{-2\pi jz\frac{T'  \sigma^2}{N_t}}}{\bigl(z-\frac{m}{2\pi j}\bigr)^{nm}},\\
&g(z) = (-1)^{nm}\frac{1-(1-2\pi jz)^{nm}}{(2\pi j)^{nm}(2\pi jz)}e^{-2\pi jz\frac{T'  \sigma^2}{N_t}}.
\end{align}
\end{subequations}
Then after substituting the functions in~\eqref{eq:Residueofintegral}, we can express the integral $I$ as
\begin{align}
Q= \frac{g^{(nm-1)}(z^\ast)}{(nm-1)!}.
\end{align}

\newpage
\bibliography{refs}

\begin{thebibliography}{10}
\providecommand{\url}[1]{#1}
\csname url@samestyle\endcsname
\providecommand{\newblock}{\relax}
\providecommand{\bibinfo}[2]{#2}
\providecommand{\BIBentrySTDinterwordspacing}{\spaceskip=0pt\relax}
\providecommand{\BIBentryALTinterwordstretchfactor}{4}
\providecommand{\BIBentryALTinterwordspacing}{\spaceskip=\fontdimen2\font plus
\BIBentryALTinterwordstretchfactor\fontdimen3\font minus
  \fontdimen4\font\relax}
\providecommand{\BIBforeignlanguage}[2]{{%
\expandafter\ifx\csname l@#1\endcsname\relax
\typeout{** WARNING: IEEEtran.bst: No hyphenation pattern has been}%
\typeout{** loaded for the language `#1'. Using the pattern for}%
\typeout{** the default language instead.}%
\else
\language=\csname l@#1\endcsname
\fi
#2}}
\providecommand{\BIBdecl}{\relax}
\BIBdecl

\bibitem{survey:whatwill5gbe}
J.~G. Andrews, S.~Buzzi, W.~Choi, S.~Hanly, A.~Lozano, A.~C.~K. Soong, and
  J.~C. Zhang, ``{What will 5G be?}'' \emph{IEEE J. Select. Areas Commun.},
  vol.~32, pp. 1065--1082, {Jun.} 2014.

\bibitem{elzammWavesurvey}
S.~Rangan, T.~S. Rappaport, and E.~Erkip, ``{Millimeter-wave cellular wireless
  networks: Potentials and Challenges},'' \emph{Proceedings of the IEEE}, vol.
  102, no.~2, pp. 366--385, {Mar.} 2014.

\bibitem{omarmmWaveprecoding}
O.~{El Ayach}, S.~Rajagopal, S.~{Abu-Surra}, Z.~Pi, and R.~W. Heath.,
  ``{Spatially sparse precoding in millimeter wave MIMO systems},'' \emph{IEEE
  Trans. Wireless Comm.}, vol.~13, no.~3, pp. 1499--15\,113, {Jan.} 2014.

\bibitem{Rappa_itwillwork}
T.~Rappaport, S.~Sun, R.~Mayzus, H.~Zhao, Y.~Azar, K.~Wang, G.~N. Wong, J.~K.
  Schulz, M.~Samimi, and F.~Gutierrez, ``{Millimeter wave communications for 5G
  cellular: It will work!}'' \emph{IEEE Access}, vol.~1, no.~1, pp. 335--349,
  {Aug.} 2013.

\bibitem{complte}
L.~Daweon, S.~Hanbyul, B.~Clerckx, E.~Hardouin, D.~Mazzarese, S.~Nagata, and
  K.~Sayana, ``{Coordinated multipoint transmission and reception in
  LTE-advanced: Deployment scenarios and operational challenges},'' \emph{IEEE
  Communications Magazine}, vol.~50, no.~2, pp. 148--155, 2012.

\bibitem{salammmWavecoverage}
S.~Akoum, O.~{El Ayach}, and R.~W. Heath, ``{Coverage and capacity in mmWave
  cellular system},'' in \emph{Proc. Asilomar Conf. Signals, Systems and
  Computers}, Pacific Grove, CA USA, {Nov.} 2012, pp. 688--692.

\bibitem{taimmWavedenseconference}
T.~Bai and R.~W. Heath, ``{Coverage in dense millimeter wave cellular
  networks},'' in \emph{Proc. Asilomar Conf. Signals, Systems and Computers},
  Pacific Grove, CA USA, {Nov.} 2013, pp. 2062--2066.

\bibitem{taimmWavedensejournal}
------, ``{Coverage and rate analysis for millimeter wave cellular networks},''
  \emph{IEEE Trans. Wireless Comm.}, vol.~PP, no.~99, p.~1, {Oct.} 2014.

\bibitem{gauravmicrowavecomp}
G.~Nigam, P.~Minero, and M.~Haenggi, ``Coordinated multipoint in heterogeneous
  networks: A stochastic geometry approach,'' in \emph{Proc. IEEE Global
  Telecommun. Conf.}, Atlanta, Dec. 2013.

\bibitem{Nigam_CoMP_journal}
------, ``{Coordinated multipoint joint transmission in heterogeneous
  networks},'' \emph{IEEE Trans. Commun.}, 2014.

\bibitem{Caire_JSDM1}
J.~Nam, A.~Adhikary, J.-Y. Ahn, and G.~Caire, ``{Joint Spatial Division and
  Multiplexing: Opportunistic beamforming, user grouping and simplified
  downlink scheduling},'' \emph{IEEE J. Select. Areas Commun.}, vol.~8, no.~5,
  pp. 876--890, {Mar.} 2014.

\bibitem{jefftractablemodel}
J.~G. Andrews, F.~Baccelli, and R.~K. Ganti, ``{A tractable approach to
  coverage and rate in cellular networks},'' \emph{IEEE Trans. Commun.},
  vol.~59, no.~11, pp. 3122--3134, {Jan.} 2011.

\bibitem{mmWavemustapha}
M.~R. Akdeniz, Y.~Liu, M.~K. Samimi, S.~Sun, S.~Rangan, T.~S. Rappaport, and
  E.~Erkip, ``{Millimeter wave channel modeling and cellular capacity
  evaluation},'' \emph{IEEE J. Select. Areas Commun.}, vol.~32, no.~6, pp.
  1164--1179, 2014.

\bibitem{book:davidtse}
D.~Tse and P.~Viswanath, \emph{Fundamentals of Wireless Communication}.\hskip
  1em plus 0.5em minus 0.4em\relax Cambridge University Press, 2005.

\bibitem{taijournalrandomshapetheory}
T.~Bai, R.~Vaze, and R.~W. Heath, ``{Analysis of blockage effects on urban
  cellular networks},'' \emph{IEEE Trans. Wireless Comm.}, vol.~13, no.~9, pp.
  5070--5083, {Sep.} 2014.

\bibitem{baccellialoha}
F.~Baccelli, B.~Blaszczyszyn, and P.~Muhlethaler, ``{Stochastic analysis of
  spatial and opportunistic aloha},'' \emph{IEEE J. Select. Areas Commun.},
  vol.~27, no.~7, pp. 1105--1119, {Sep.} 2009.

\bibitem{book:martinhaenggi}
M.~Haenggi, \emph{Stochastic Geometry for Wireless Networks}.\hskip 1em plus
  0.5em minus 0.4em\relax Cambridge University Press, 2013.

\bibitem{Motchanov_Distance}
D.~Moltchanov, ``{Survey paper: Distance distributions in random networks},''
  \emph{Ad Hoc Networks}, vol.~10, no.~6, pp. 1146--1166, {Aug.} 2012.

\end{thebibliography}
\bibliographystyle{IEEEtran}

\end{document}